\newtheorem{theorem}{Theorem}
\newtheorem{lemma}{Lemma}
\newtheorem{corollary}{Corollary}
\newtheorem{definition}{Definition}
\newtheorem{example}{Example}
\newenvironment{proof}{\paragraph*{Proof}}{\hfill$\square$}
\providecommand{\eref}[1]{\eqref{#1}}  
\providecommand{\cref}[1]{Chapter~\ref{#1}}
\providecommand{\fref}[1]{Figure~\ref{#1}}
\providecommand{\R}{\ensuremath{\mathbb{R}}}
\providecommand{\I}{\ensuremath{\mathbb{I}}}
\providecommand{\E}{\ensuremath{\mathbb{E}}}
\providecommand{\N}{\ensuremath{\mathbb{N}}}
\providecommand{\bydef}{\overset{\text{def}}{=}}
\renewcommand{\vec}[1]{\ensuremath{\boldsymbol{#1}}}
\providecommand{\mat}[1]{\ensuremath{\boldsymbol{#1}}}
\providecommand{\mY}{\mat{Y}}
\providecommand{\vy}{\vec{y}}
\providecommand{\thetahat}{\widehat{\theta}}
\providecommand{\Var}{\mathrm{Var}}
\newcommand{\argmax}[1]{\mathop{\underset{#1}{\mbox{argmax}}}}
\begin{document}

%
\title{Exposure-Referred Signal-to-Noise Ratio \\ for Digital Image Sensors}
%
%
%

\author{Abhiram~Gnanasambandam,~\IEEEmembership{Student~Member,~IEEE}, and~Stanley~H.~Chan,~\IEEEmembership{Senior~Member,~IEEE}
\thanks{A. Gnanasambandam and S. Chan are with the School of Electrical and Computer
Engineering, Purdue University, West Lafayette, IN 47907, USA. Email: \{agnanasa, stanchan\}@purdue.edu.}
}

\maketitle

\begin{abstract}
The signal-to-noise ratio (SNR) is a fundamental tool to measure the performance of an image sensor. However, confusions sometimes arise between the two types of SNRs. The first one is the output-referred SNR which measures the ratio between the signal and the noise seen at the sensor's output. This SNR is easy to compute, and it is linear in the log-log scale for most image sensors. The second SNR is the exposure-referred SNR, also known as the input-referred SNR. This SNR considers the noise at the input by including a derivative term to the output-referred SNR. The two SNRs have similar behaviors for sensors with a large full-well capacity. However, for sensors with a small full-well capacity, the exposure-referred SNR can capture some behaviors that the output-referred SNR cannot.

While the exposure-referred SNR has been known and used by the industry for a long time, a theoretically rigorous derivation from a signal processing perspective is lacking. In particular, while various equations can be found in different sources of the literature, there is currently no paper that attempts to assemble, derive, and organize these equations in one place. This paper aims to fill the gap by answering four questions: (1) How is the exposure-referred SNR derived from first principles? (2) Is the output-referred SNR a special case of the exposure-referred SNR, or are they completely different? (3) How to compute the SNR efficiently? (4) What utilities can the SNR bring to solving imaging tasks? New theoretical results are derived for image sensors of \emph{any} bit-depth and full-well capacity.
\end{abstract}

\begin{IEEEkeywords}
Signal-to-noise ratio (SNR), full-well capacity, CMOS image sensors (CIS), charge coupled devices (CCD), quanta image sensors (QIS), single-photon imaging.
\end{IEEEkeywords}

\section{Introduction}
The signal-to-noise ratio (SNR) is a basic tool to measure a device's performance when acquiring, transmitting, and processing raw data in the presence of noise. In as early as 1949, when Claude Shannon derived the information capacity of a noisy Gaussian channel, the concept of SNR was already presented \cite{Shannon_1949}. As the name suggests, the SNR is the ratio between the signal power and the noise power
\begin{equation}
\text{SNR} = \frac{\text{signal power}}{\text{noise power}},
\end{equation}
which is sometimes expressed in the logarithmic scale via $10\log_{10} \text{SNR}$ with the unit decibel (dB). Assuming that the signal follows the equation $Y = \theta + W$ where $\theta$ is a scalar and $W \sim \text{Gaussian}(0,\sigma^2)$ is the white noise, one can measure the SNR at the \emph{output} by defining
\begin{equation}
\text{SNR}_{\text{out}}(\theta) = \frac{\text{signal at output}}{\text{noise at output}} = \frac{\E[Y]}{\sqrt{\Var[Y]}},
\label{eq: SNR out}
\end{equation}
where the conversion from power to magnitude is taken care by changing the log from $10\log_{10} \text{SNR}$ to $20\log_{10} \text{SNR}_{\text{out}}$. In this equation, $\E[\cdot]$ denotes the expectation and $\Var[\cdot]$ denotes the variance of the random variable $Y$. The SNR defined in \eref{eq: SNR out} is known as the \emph{output-referred} SNR.

In the sensor's literature, there is an alternative definition of the SNR which is based on the exposure (i.e., the input). The definition of this \emph{exposure-referred} SNR is \cite{EMVA_2010}
\begin{equation}
\text{SNR}_{\text{exp}}(\theta) = \frac{\text{signal at input}}{\text{noise at input}} = \frac{\theta}{\sqrt{\Var[Y]}\cdot \frac{d\theta}{d \E[Y]}}.
\label{eq: SNR exp}
\end{equation}
The denominator of the equation is called the exposure-referred noise or the input-referred noise \cite{Gamal_Lecture}. In computer vision, an early work of Mitsunaga and Nayar in 1999 \cite{Mitsunaga_Nayar} mentioned the equation (Equation \eref{eq: SNR exp}) although they did not give it a name.

The interpretation of the derivative $d\theta/d\E[Y]$ is known to the sensor's community \cite{EMVA_2010}. The argument is that the input-referred noise is computed by back-propagating the signal from the output to the input via the transfer function $d\theta/d\E[Y]$ which approximates the relationship between the input $\theta$ and the output $\E[Y]$ \cite[supplementary report]{elgendy2018optimal}. As mentioned in multiple papers \cite{fossum2013modeling, elgendy2018optimal}, the presence of this derivative term is particularly important for sensors with a small full-well capacity because when a sensor saturates, the SNR has to drop. The exposure-referred SNR can capture this phenomenon whereas the output-referred SNR cannot.

Both versions of the SNRs are widely used in the industry \cite{EMVA_2010}. People use the SNRs to evaluate sensors and to design sensor parameters so that the performance is maximized. However, from a theoretical point of view, besides the usual interpretation of treating the derivative as a transfer function, mathematically there is no rigorous proof trying to unify the two SNRs. The paper by Yang et al. \cite{yang2011bits} introduced the SNR from a statistical estimation perspective, whereas the paper by Elgendy and Chan \cite{elgendy2018optimal} proved the equivalence between the SNR in \cite{yang2011bits} and $\text{SNR}_{\text{exp}}$ for the case of one-bit sensors under a zero read noise. The connections between $\text{SNR}_{\text{exp}}$ and $\text{SNR}_{\text{out}}$ and their generalizations to multi-bit sensors remain unanswered. The goal of this paper is to fill the gap by answering four questions:
\begin{enumerate}
\item[(i)] How is the exposure-referred SNR derived from the first principle?
\item[(ii)] Under the unified framework addressed in (i), is it possible to show that $\text{SNR}_{\text{out}}$ is a special case of $\text{SNR}_{\text{exp}}$?
\item[(iii)] How to efficiently compute $\text{SNR}_{\text{exp}}$ for complex forward imaging models?
\item[(iv)] What are the utilities of $\text{SNR}_{\text{exp}}(\theta)$?
\end{enumerate}

\section{Mathematical Background}

\subsection{Limitations of Output-Referred SNR}
To begin the discussion, it would be useful to first study the output-referred SNR and understand its limitations. Consider a Poisson-Gaussian random variable which is commonly used in modeling image sensors \cite{Nakamura_2005_book, lim2006characterization}:
\begin{equation*}
X \sim \text{Poisson}(\theta) + \text{Gaussian}(0,\sigma_{\text{read}}^2),
\end{equation*}
where $\theta$ denotes the average number of photons (i.e., the flux integrated over the surface area and the exposure time), and $\sigma_{\text{\scriptsize read}}$ is the standard deviation of the read noise. This equation is a simplified model. Factors such as dark current, pixel non-uniformity, defects, analog-to-digital conversion, and color filter arrays are not considered to make the theoretical derivations tractable. Later in the paper, when the Monte Carlo simulation is introduced, some of these factors will be included.

Assuming that the sensor has a full-well capacity $L$ (with the unit electrons), the output produced by the sensor follows the equation
\begin{equation}
Y =
\begin{cases}
X, &\qquad X < L,\\
L, &\qquad X \ge L.
\end{cases}
\label{eq: truncated Poisson}
\end{equation}
If the full-well capacity is $L =\infty$, the measurement $Y$ will never saturate, and hence the expectation is $\E[Y] = \theta$ and the variance is $\Var[Y] = \theta + \sigma_{\text{\scriptsize read}}^2$. Then, according to \eref{eq: SNR out}, $\text{SNR}_{\text{out}}$ can be computed via
\begin{equation}
\text{SNR}_{\text{out}}(\theta) = \frac{\theta}{\sqrt{\theta+\sigma_{\text{read}}^2}}.
\label{eq: SNR out 2}
\end{equation}
If the read noise is negligible, i.e. $\sigma_{\text{\scriptsize read}} \ll \theta$, the SNR can be simplified to $\text{SNR}_{\text{out}}(\theta) = \sqrt{\theta}$. This equation can be found in many standard texts, e.g., \cite{Nakamura_2005_book}.

The problem arises when the full-well capacity is \emph{finite}. When $L < \infty$, $\text{SNR}_{\text{out}}$ in \eref{eq: SNR out} is still adequate to capture the sensor's behavior when the exposure $\theta$ is smaller than the full-well capacity $L$. However, if $\theta$ reaches the full-well and goes beyond it, the mean $\E[Y]$ will stop growing with $\theta$ as illustrated in \fref{fig: response}. The variance $\Var[Y]$ will gradually drop to zero because when $Y$ goes beyond the full-well capacity, it will be capped. As a result, $\text{SNR}_{\text{out}}$ according to \eref{eq: SNR out} will eventually go to infinity (because $\Var[Y] \rightarrow 0$). This is not a desirable behavior because the SNR beyond the saturation is supposed to be poor.

\begin{figure}[ht]
\centering
\includegraphics[width=\linewidth]{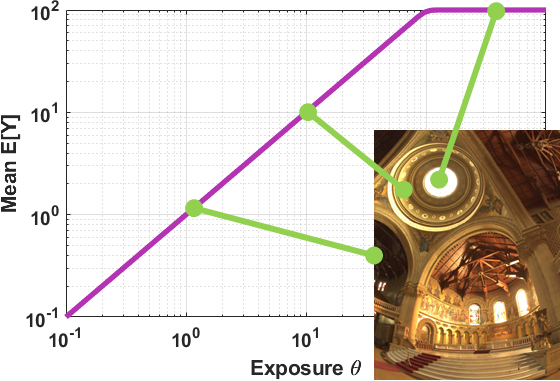}
\vspace{-4ex}
\caption{With a finite full-well capacity, the mean $\E[Y]$ will stop growing when the exposure $\theta$ exceeds the full-well capacity $L = 10^2$.}
\label{fig: response}
\end{figure}

\subsection{One-bit and Multi-bit Sensors}
If the full-well capacity is the source of the issue, the next question is how commonly does it happen. The full-well capacity of a conventional CMOS image sensor (CIS) is thousands of electrons. Therefore, unless the scene is bright and the integration is long, the sensor generally does not need to operate near the full-well capacity. However, for a newer type of image sensors based on single-photon detectors, the full-well capacity can be as small as only one electron. Their idea is to acquire binary bits at a very high frame rate and use computational methods to reconstruct the image. This class of sensors is broadly known as the quanta image sensors (QIS) that can be implemented using the single-photon avalanche diodes (SPAD-QIS) or the existing CMOS technology (CIS-QIS) \cite{fossum200611}.

From a mathematical modeling point of view, QIS is no different from a conventional CIS if it operates in the multi-bit mode because the equation will follow \eref{eq: truncated Poisson} \cite{fossum2013modeling}. The difference is that for QIS, the read noise is about 0.2 electrons whereas for CIS, the read noise can range from 2 to tens of electrons. If QIS operates in the one-bit mode, $Y$ will follow
\begin{equation}
Y =
\begin{cases}
0, &\qquad X < q,\\
1, &\qquad X \ge q,
\end{cases}
\label{eq: QIS model}
\end{equation}
where $q$ is the threshold (usually set as $q = 0.5$). Because of the generality of equations \eref{eq: truncated Poisson} and \eref{eq: QIS model}, the theoretical results of this paper is applicable to sensors of \emph{any} bit-depth, including QIS and the conventional CIS.

\subsection{Truncated Poisson and the Incomplete Gamma Function}
The multi-bit sensor equation in \eref{eq: truncated Poisson} requires some statistical tools to handle the truncated Poisson random variable. To further simplify notations, in this section, the read noise is $\sigma_{\text{read}} = 0$ so that $X \sim \text{Poisson}(\theta)$. In this case, the probability mass function of $Y$ is
\begin{equation}
p_Y(y) =
\begin{cases}
\frac{\theta^{y}}{y!}e^{-\theta}, &\qquad y < L,\\
\sum_{\ell = L}^\infty \frac{\theta^{\ell}}{\ell!}e^{-\theta}, &\qquad y = L.
\end{cases}
\end{equation}
By construction, the random variable $Y$ will never take a value greater than $L$. The probability that $Y = L$ is given by the sum of the Poisson tail, which can be conveniently expressed via the incomplete Gamma function as shown in \fref{fig: incomplete gamma function}.
\begin{figure}[ht]
\centering
\vspace{-2ex}
\includegraphics[width=0.95\linewidth]{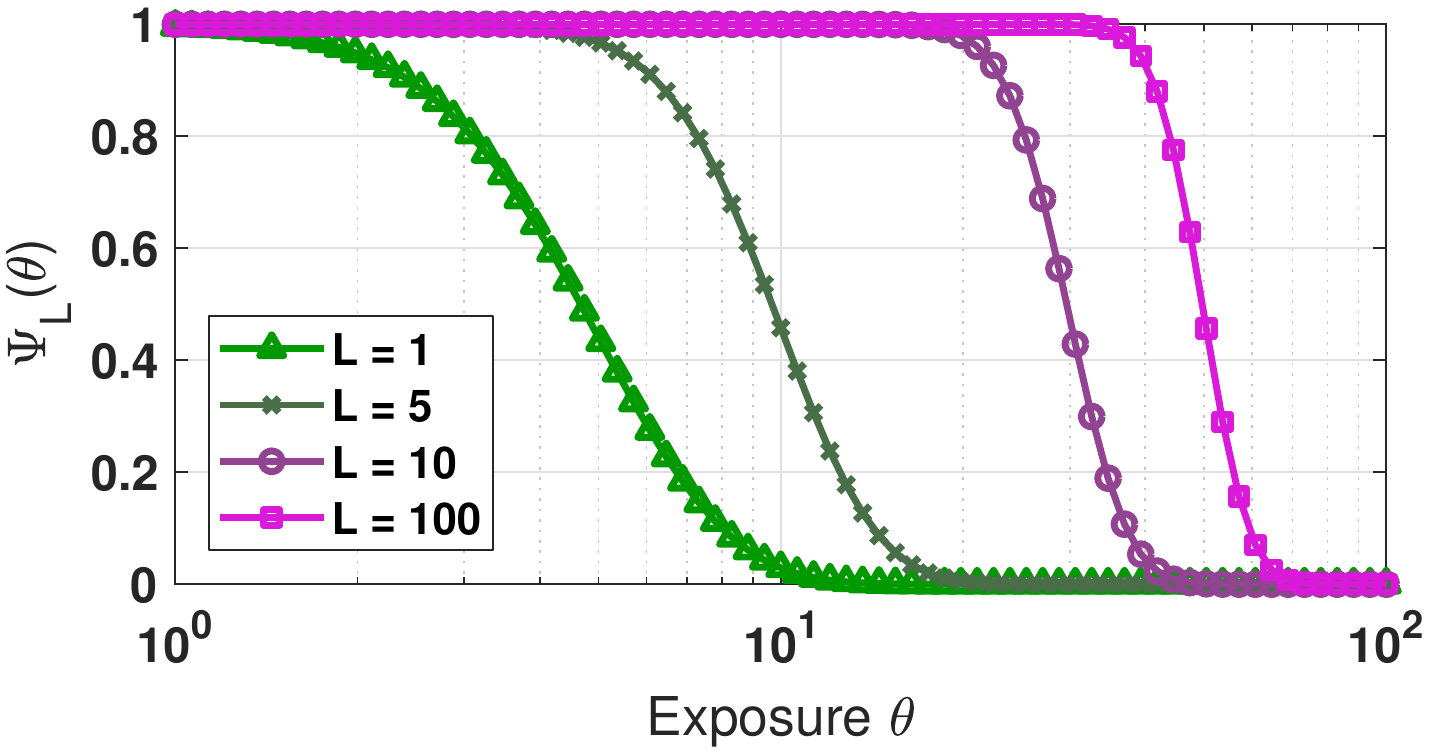}
\vspace{-2ex}
\caption{Incomplete Gamma function $\Psi_L(\theta)$ as a function of $\theta$.}
\vspace{-2ex}
\label{fig: incomplete gamma function}
\end{figure}

\begin{definition}[Incomplete Gamma function, \cite{Whittlesey}]
The upper incomplete Gamma function is defined as $\Psi_L: \R_+ \rightarrow [0,1]$, with
\begin{equation}
\Psi_L(\theta) = \frac{1}{\Gamma(L)}\int_{\theta}^{\infty}t^{L-1}e^{-t}dt = \sum_{\ell=0}^{L-1}\frac{\theta^\ell e^{-\theta}}{\ell!},
\end{equation}
for $\theta>0, L \in \N$ where $\Gamma(L) = (L-1)!$ is the standard Gamma function.
\end{definition}

A few useful properties of $\Psi_L(\theta)$ can be derived. First, the first-order derivative of $\Psi_L(\theta)$ is
\begin{equation}
\Psi'_L(\theta) = -\frac{\theta^{L-1}e^{-\theta}}{(L-1)!} < 0, \;\; \text{for all} \; \theta,
\end{equation}
which means that $\Psi_L(\theta)$ is a strictly decreasing function in $\theta$. The steepest slope can be determined by analyzing the curvature
\begin{equation*}
\Psi''_L(\theta) = -(L-1)\theta^{L-2}e^{-\theta} + e^{-\theta}\theta^{L-1}.
\end{equation*}
Equating this to zero will yield $\theta^* = L-1$. At this critical point and assume $L \gg 1$, a new result using Stirling's formula can be shown\footnote{The proof of this result is given in the Appendix.}:
\begin{align}
\Psi_L'(\theta^*) \approx -\frac{1}{\sqrt{2\pi\theta^*}} \exp\left\{-\frac{(\theta^*-(L-1))^2}{2\theta^*}\right\}.
\label{eq: Psi approximate}
\end{align}
Therefore, $\Psi_L'(\theta^*) = -\frac{1}{\sqrt{2\pi(L-1)}}$. Hence, the slope of the incomplete Gamma function reduces as $L$ increases.

\emph{Remark 1}: Most papers in the image sensor's literature plot curves with respect to $\log_{10} \theta$ instead of $\theta$, like the one shown in \fref{fig: incomplete gamma function}. The $x$-axis compression caused by $\log_{10}\theta$ will make the transient of the incomplete Gamma function to appear steeper. The reason is that for any function $f(\theta)$, the slope in the $\log_{10} \theta$ space is determined by $\frac{d}{d\log_{10} \theta} f(\theta) = \theta f'(\theta) \cdot \log 10$. So for large $\theta$, the slope appears steeper.

\subsection{Delta Method}
A mathematical tool that will become useful later in the paper is the \emph{Delta Method} in statistics. It approximates the variance when a random variable undergoes a nonlinear transformation.

\begin{lemma}[Delta Method, \cite{Lehmann_1999}]
Consider a sequence of independently and identically distributed (i.i.d.) random variables $X_1,\ldots,X_N$ with a common mean $\E[X_1] = \mu$. Define $\overline{X} = (1/N)\sum_{n=1}^N X_n$ be the sample average, and assume that $\sqrt{N}(\overline{X}-\mu) \overset{d}{\rightarrow} \text{Gaussian}(0,\tau^2)$, where $\tau^2$ is the standard deviation of a Gaussian random variable, and $\overset{d}{\rightarrow}$ denotes convergence in distribution. Suppose that there is a continuously differentiable function $f$ such that $f'(\mu)$ exists and is not zero. Then $\sqrt{N}[f(\overline{X}) - f(\mu)] \overset{d}{\rightarrow} \text{Gaussian}(0,\tau^2 (f'(\mu))^2)$. In other words,
\begin{equation}
\E[(f(\overline{X})-f(\mu))^2] \approx [f'(\mu)]^2 \Var[\overline{X}].
\label{eq: delta method equation}
\end{equation}
\end{lemma}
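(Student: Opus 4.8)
The plan is to obtain the limiting distribution through a first-order Taylor expansion of $f$ about $\mu$, and then read off the variance approximation \eref{eq: delta method equation} as a corollary. First I would invoke Taylor's theorem in its mean-value form: since $f$ is continuously differentiable, for each realization there exists a point $\widetilde{X}$ lying between $\overline{X}$ and $\mu$ such that $f(\overline{X}) - f(\mu) = f'(\widetilde{X})\,(\overline{X} - \mu)$. Multiplying both sides by $\sqrt{N}$ gives the key decomposition
\[
\sqrt{N}\,[f(\overline{X}) - f(\mu)] = f'(\widetilde{X})\cdot \sqrt{N}\,(\overline{X} - \mu),
\]
which reduces the problem to understanding the two factors on the right-hand side: a multiplier $f'(\widetilde{X})$ and the already-controlled term $\sqrt{N}(\overline{X}-\mu)$.

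Next I would argue that the multiplier converges in probability to the constant $f'(\mu)$. The hypothesis that $\sqrt{N}(\overline{X} - \mu)$ converges in distribution to a Gaussian with finite variance forces $\overline{X} - \mu \overset{P}{\rightarrow} 0$, because $\overline{X}-\mu = N^{-1/2}\cdot\sqrt{N}(\overline{X}-\mu)$ is a term bounded in distribution divided by $\sqrt{N}$. Consequently $\widetilde{X}$, being squeezed between $\overline{X}$ and $\mu$, also satisfies $\widetilde{X}\overset{P}{\rightarrow}\mu$, and the continuity of $f'$ then yields $f'(\widetilde{X})\overset{P}{\rightarrow}f'(\mu)$ via the continuous mapping theorem.

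With both factors in hand, I would apply Slutsky's theorem: the product of a sequence converging in distribution to $\text{Gaussian}(0,\tau^2)$ and a sequence converging in probability to the constant $f'(\mu)$ converges in distribution to $f'(\mu)\cdot\text{Gaussian}(0,\tau^2) = \text{Gaussian}(0,(f'(\mu))^2\tau^2)$, which is precisely the claimed limiting law. To then extract \eref{eq: delta method equation}, I would read off the asymptotic variance of this limit as $(f'(\mu))^2\tau^2$; since $\Var[\sqrt{N}(\overline{X}-\mu)]\to\tau^2$ gives $\Var[\overline{X}]\approx\tau^2/N$, and since $\E[f(\overline{X})]\approx f(\mu)$, combining these yields $\E[(f(\overline{X})-f(\mu))^2]\approx(f'(\mu))^2\Var[\overline{X}]$.

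The step requiring the most care, and the main obstacle, is the passage from convergence in distribution to convergence in probability for the multiplier $f'(\widetilde{X})$. A naive linearization alone is insufficient; one must verify that the Taylor remainder is asymptotically negligible, which is exactly what the consistency $\widetilde{X}\overset{P}{\rightarrow}\mu$, the continuity of $f'$, and Slutsky's theorem together deliver. The assumption $f'(\mu)\neq 0$ is what guarantees the limiting Gaussian is non-degenerate, so that the variance approximation in \eref{eq: delta method equation} is both meaningful and of the correct scaling order.
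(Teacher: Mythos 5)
Your proof is correct, but it takes a genuinely different route from the paper's. The paper (following the cited Theorem 2.5.2 of Lehmann) expands $f$ around $\mu$ with the slope fixed at $f'(\mu)$ and a stochastic remainder, $f(\overline{X}) = f(\mu) + f'(\mu)(\overline{X}-\mu) + o_p(\overline{X}-\mu)$, then multiplies by $\sqrt{N}$ and argues the remainder is negligible; the remainder term there comes directly from the definition of the derivative at the single point $\mu$, combined with consistency of $\overline{X}$. You instead invoke the mean-value (Lagrange) form of Taylor's theorem to get an \emph{exact} identity $f(\overline{X}) - f(\mu) = f'(\widetilde{X})(\overline{X}-\mu)$ with no remainder at all; the price is that the slope is now random, and controlling it is where you use the full strength of continuous differentiability: $\widetilde{X} \overset{P}{\rightarrow} \mu$ by the squeeze, $f'(\widetilde{X}) \overset{P}{\rightarrow} f'(\mu)$ by continuous mapping, and Slutsky finishes. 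The trade-off is clear: your version avoids $o_p$ calculus entirely but needs $f'$ continuous in a neighborhood of $\mu$, whereas the paper's version needs only differentiability at $\mu$ but requires comfort with stochastic order notation. One technical caveat specific to your route: the intermediate point $\widetilde{X}$ supplied by the mean value theorem need not be a measurable function of the data; the standard patch is to write the exact linearization in integral form, $f(\overline{X}) - f(\mu) = (\overline{X}-\mu)\int_0^1 f'\bigl(\mu + t(\overline{X}-\mu)\bigr)\,dt$, whose multiplier is measurable and converges in probability to $f'(\mu)$ by the same continuity argument. Finally, both you and the paper pass from the distributional limit to the moment statement \eref{eq: delta method equation} informally (convergence in distribution does not by itself yield convergence of second moments without uniform integrability), so on that last step you are exactly at parity with the paper.
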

\begin{proof}
The complete proof can be found in \cite[Theorem 2.5.2]{Lehmann_1999}. The two key arguments in the proof are (1) $\overline{X}$ converges in distribution to $\mu$ due to the Central Limit Theorem; (2) Taylor approximation gives
\begin{align*}
f(\overline{X}) \approx f(\mu) + f'(\mu)(\overline{X}-\mu) + o_p(\overline{X}-\mu),
\end{align*}
where $o_p$ denotes the little-$o_p$ notation \cite[Definition 2.1.3]{Lehmann_1999} and hence
\begin{align*}
\sqrt{N}\left[f(\overline{X}) - f(\mu)\right] \approx \sqrt{N} f'(\mu)(\overline{X} - \mu) + o_p(\sqrt{N}(\overline{X}-\mu)).
\end{align*}
Taking squares on both sides gives \eref{eq: delta method equation}.
\end{proof}

\section{SNR: A Statistical Definition}
\subsection{Defining the SNR}
When defining the SNR, it is important to clarify the notion of signal and noise. In most of the imaging problems, the underlying signal is the scene exposure $\theta$. The signal $\theta$ defines a probability distribution $p_Y(y;\theta)$ from which the i.i.d. samples $Y_1,\ldots,Y_N$ are drawn.

Reconstruction of the signal $\theta$ from $Y_1,\ldots,Y_N$ is based on an \emph{estimator} $\thetahat$. An estimator can be any mapping that maps $\mY = [Y_1,\ldots,Y_N]$ to $\thetahat(\mY)$. However, if $\theta$ is a fixed scalar, most sensors will produce an estimate based on the sample average $\overline{Y} = (1/N)\sum_{n=1}^N Y_n$ because the on-chip processing today is largely limited to simple operations such as addition. In this case, the estimator can be written as $\widehat{\theta}(\overline{Y})$, which can also be interpreted as an estimator based on the sufficient statistics.

The noise term in the SNR is the deviation between the estimator $\thetahat(\mY)$ and the true parameter $\theta$. Since the estimator $\thetahat(\mY)$ is random, the noise power is the expectation
\begin{equation}
\text{noise power} = \E[(\thetahat(\mY) - \theta)^2],
\end{equation}
which is also the mean squared error.

\begin{definition}[SNR, formal definition]
Let $\mY = [Y_1,\ldots,Y_N]$ be i.i.d. random variables drawn from the distribution $p_Y(y;\theta)$. Construct an estimator $\thetahat(\mY)$. Then the signal-to-noise ratio (SNR) is
\begin{equation}
\text{SNR}(\theta) \bydef \frac{\theta}{\sqrt{\E[(\thetahat(\mY) - \theta)^2]}}.
\label{eq: SNR definition}
\end{equation}
\end{definition}

The above definition is not a new invention. When Yang et al. presented the analysis of the quanta image sensor in 2011 \cite{yang2011bits}, this definition was already used to derive the Cramer-Rao lower bound. Subsequent papers such as \cite{elgendy2018optimal} and \cite{Gnanasambandam_TCI_HDR} by Chan and colleagues also rely on this definition. A version for single-photon avalanche diode (SPAD) is presented by Gupta and colleagues \cite{Gupta_2019}.

To elaborate on this formal definition of the SNR, consider the two examples below.
\begin{example}[Poisson]
Let $Y_1,\ldots,Y_N \overset{\text{\tiny i.i.d.}}{\sim} \text{Poisson}(\theta)$, and consider the maximum-likelihood estimator $\thetahat(\overline{Y}) = \overline{Y}$. (The derivation of the ML estimator for a Poisson random variable is skipped for brevity.) Since $\thetahat(\overline{Y}) = \overline{Y}$, it follows that
\begin{equation*}
\E[(\thetahat(\overline{Y}) - \theta)^2] = \E[(\overline{Y}-\theta)^2] = \frac{\theta}{N},
\end{equation*}
where the second equality holds because the variance of a Poisson random variable is $\theta$. Therefore, $\text{SNR}(\theta) = \sqrt{N} \cdot \sqrt{\theta}$, which is consistent with \eref{eq: SNR out 2} when $\sigma_{\text{\scriptsize read}} = 0$. \hfill $\square$
\end{example}

\begin{example}[Poisson + Gaussian]
Let $Y_1,\ldots,Y_N \overset{\text{\tiny i.i.d.}}{\sim} \text{Poisson}(\theta) + \text{Gaussian}(0,\sigma_{\text{\scriptsize read}}^2)$. Consider the ML estimator $\thetahat(\overline{Y}) = \overline{Y}$. It then follows that
\begin{equation*}
\E[(\thetahat(\overline{Y}) - \theta)^2] = \E[(\overline{Y}-\theta)^2] = \frac{1}{N} \left(\theta+\sigma_{\text{\scriptsize read}}^2\right),
\end{equation*}
where the second equality holds because the variance of a Poisson-Gaussian is the sum of the two variances. Therefore, $\text{SNR}(\theta) = \sqrt{N} \cdot \theta / \sqrt{\theta+\sigma_{\text{\scriptsize read}}^2}$. This result is consistent with \eref{eq: SNR out 2} for a general $\sigma_{\text{\scriptsize read}}$.  \hfill $\square$
\end{example}

\subsection{Mean Invariance}
The formal definition of the SNR is general for \emph{any} estimator. However, the statistical noise model for an actual image sensor can be complicated. In fact, except for a few special occasions where the maximum-likelihood (ML) estimator can be expressed in a closed form, in most other situations the ML estimator cannot be obtained in a closed form. In this case, a more convenient way is to define the estimator from the mean.

\begin{definition}[Mean invariant estimator]
Let $Y_1,\ldots,Y_N$ be i.i.d. random variables drawn from the distribution $p_Y(y;\theta)$. Let $\mu(\theta) = \E[Y] = \int y \cdot p_Y(y; \theta) \, dy$ be the mean of $Y$. An estimator $\thetahat(\mY)$ is mean invariant if
\begin{equation}
\mu(\thetahat(\mY)) = \overline{Y}.
\end{equation}
If $\mu^{-1}$ exists, the mean invariant estimator is $\thetahat(\mY) = \mu^{-1}(\overline{Y})$.
\end{definition}

The following two examples shows that many estimators are mean invariant.
\begin{example}
\label{example MI Gaussian}
Let $Y_n \overset{ \text{\tiny{i.i.d.}}}{\sim} \text{Gaussian}(\theta,\sigma_{\text{\scriptsize read}}^2)$ for $n = 1,\ldots,N$, where $\theta$ is the unknown parameter. It can be shown that the maximum-likelihood (ML) estimator is
\begin{align*}
\thetahat_{\text{ML}}(\overline{Y}) = \argmax{\theta}\; \prod_{n=1}^N \frac{1}{\sqrt{2\pi\sigma_{\text{\scriptsize read}}^2}} \exp\left\{-\frac{(Y_n-\theta)^2}{2\sigma_{\text{\scriptsize read}}^2}\right\} = \overline{Y}.
\end{align*}
Notice that if $Y$ is Gaussian, the mean is $\mu(\theta) = \E[Y] = \theta$. Therefore, the mean invariance property holds:
\begin{align*}
\mu(\thetahat_{\text{ML}}(\overline{Y})) \overset{(a)}{=} \thetahat_{\text{ML}}(\overline{Y}) = \overline{Y}.
\end{align*}
where $(a)$ is due to the fact that $\mu(\theta) = \theta$.  \hfill $\square$
\end{example}

\begin{example}
\label{example MI one bit}
Let $Y_1,\ldots,Y_N$ be i.i.d. one-bit measurements defined according to \eref{eq: QIS model}, with $q = 0.5$ and $\sigma_\text{read} = 0$. In this case, $Y_n$ is a Bernoulli random variable such that $Y_n \overset{ \text{\tiny{i.i.d.}}}{\sim} \text{Bernoulli}(1-e^{-\theta})$. The ML estimator is
\begin{align*}
\thetahat_{\text{ML}}(\overline{Y})
&= \argmax{\theta}\; \prod_{n=1}^N (1-e^{-\theta})^{Y_n} (e^{-\theta})^{1-Y_n} \\
&= -\log(1-\overline{Y}).
\end{align*}
Since the mean of $Y$ is $\mu(\theta) = \E[Y] = 1-e^{-\theta}$, it follows that
\begin{equation*}
\mu(\thetahat_{\text{ML}}(\overline{Y})) = 1-e^{-\thetahat_{\text{ML}}(\overline{Y})} = \overline{Y}.
\end{equation*}
Again, the mean invariance property is satisfied. \hfill $\square$
\end{example}

A mean invariant estimator is easy to construct. Even if $p_Y(y;\theta)$ has a complex form, the mean $\E[Y]$ can be obtained through Monte Carlo simulation. Once the mean $\E[Y]$ is determined, the mean invariant estimator $\thetahat$ can be \emph{constructed} from $\thetahat(\overline{Y}) = \mu^{-1}(\overline{Y})$, assuming that $\mu^{-1}$ exists.

\emph{Follow up of Example~\ref{example MI Gaussian}}. Let $Y_n \overset{ \text{\tiny{i.i.d.}}}{\sim} \text{Gaussian}(\theta,\sigma_{\text{read}}^2)$ for $n = 1,\ldots,N$ where $\theta$ is the unknown parameter. Since the mean is $\E[Y] = \theta$, it follows that $\mu(\theta) = \E[Y] = \theta$. This $\mu$ is the identity mapping, and so the inverse mapping is $\mu^{-1}(s) = s$ for any $s$. Thus, one can define an estimator $\thetahat(\overline{Y}) = \mu^{-1}(\overline{Y}) = \overline{Y}$. As seen, it is the same as the ML estimator derived in Example~\ref{example MI Gaussian}. Moreover, $\thetahat(\overline{Y})$ is mean invariant because $\thetahat$ is constructed in that way.  \hfill $\square$

\emph{Follow up of Example~\ref{example MI one bit}}. Let $Y_n \overset{ \text{\tiny{i.i.d.}}}{\sim} \text{Bernoulli}(1-e^{-\theta})$ be the one-bit measurements for $n = 1,\ldots,N$. The mean is $\mu(\theta) = \E[Y] = 1-e^{-\theta}$, and so the inverse is $\mu^{-1}(s) = -\log(1-s)$. Therefore, one can define the estimator as $\thetahat(\overline{Y}) = \mu^{-1}(\overline{Y}) = -\log(1-\overline{Y})$ The result is identical to Example~\ref{example MI one bit}. Furthermore, since the estimator is constructed from the mean invariance property, it has to satisfy the property. \hfill $\square$

Based on Example~\ref{example MI Gaussian} and Example~\ref{example MI one bit}, one may conjecture that any ML estimator is also the mean invariant estimator. The observation is correct for any distributions in the exponential family. The proof is given in the Appendix. Outside the exponential family the two can be different. The following is a counter example.

\begin{example}[Mean invariant estimator $\not=$ ML estimator]
\label{example: MIE MLE}
Consider a truncated Poisson distribution
\begin{align*}
p_Y(y;\theta) =
\begin{cases}
\frac{\theta^y e^{-\theta}}{y!}, &\qquad y < L,\\
1-\Psi_L(\theta), &\qquad y \ge L.
\end{cases}
\end{align*}

Using \eref{eq: SNR exp main components} (to be proved in the next section), the mean is $\E[Y] = \theta \Psi_{L-1}(\theta) + L (1-\Psi_L(\theta))$. Let $\mu(\theta) = \theta \Psi_{L-1}(\theta) + L (1-\Psi_L(\theta))$. So, the mean invariant estimator can be defined as
\begin{equation*}
\thetahat(\mY) = \mu^{-1}(\overline{Y}),
\end{equation*}
which is a function of $\overline{Y} = (1/N)\sum_{n=1}^N Y_n$.

Now consider the ML estimator. The ML estimator is
\begin{align}
\thetahat_{\text{ML}}(\mY)
&= \argmax{\theta} \;\; \frac{1}{N} \sum_{n=1}^N  \Bigg[( Y_n \log \theta - \theta ) \cdot \underset{Z_n}{\underbrace{\I\{Y_n < L\}}} \notag\\
& \quad + \log (1-\Psi_L(\theta)) \cdot  \underset{1-Z_n}{\underbrace{\I\{Y_n \ge L\}}} \Bigg],
\label{eq: ML MIE condition 1}
\end{align}
where $Z_n = \I\{Y_n < L\}$ is the indicator function that returns 1 if $Y_n < L$ or 0 if otherwise. Taking derivative and setting it to zero implies that $\thetahat_{\text{ML}}$ must satisfy the equation
\begin{align}
&\frac{1}{\thetahat_{\text{ML}}} \left(\frac{1}{N} \sum_{n=1}^N Y_n Z_n \right) - \frac{1}{N} \sum_{n=1}^N Z_n \notag \\
&\qquad = \left(1 - \frac{1}{N} \sum_{n=1}^N Z_n\right) \cdot \frac{\Psi_L'(\thetahat_{\text{ML}})}{1-\Psi_L(\thetahat_{\text{ML}})}.
\label{eq: ML MIE condition}
\end{align}
Therefore, the ML estimator $\thetahat_{\text{ML}}(\mY)$ must be a function of $(1/N)\sum_{n=1}^N Y_n Z_n$ and $(1/N)\sum_{n=1}^N Z_n$, not $\overline{Y}$. \hfill $\square$
\end{example}

While the ML estimator and the mean invariant estimator are generally not the same, they are asymptotically equivalent. As $N \rightarrow \infty$, the consistency of the ML estimator implies that $\thetahat_{\text{ML}}(\mY) \overset{p}{\rightarrow} \theta$ \cite{Lehmann_1999}. On the other hand, the law of large numbers implies that $\overline{Y} \overset{p}{\rightarrow} \E[Y] = \mu(\theta)$. So, a mean invariant estimator $\thetahat(\mY) = \mu^{-1}(\overline{Y}) \overset{p}{\rightarrow} \mu^{-1}(\mu(\theta)) = \theta$. Therefore, as $N \rightarrow \infty$, both $\thetahat_{\text{ML}}(\mY)$ and $\thetahat(\mY)$ will converge in probability to the true parameter $\theta$ and hence they are equivalent asymptotically.

Mean invariance is also not the same as the invariance principle of the ML estimator. The invariance principle of the ML estimator says that if there is a monotonic mapping $h$ that maps $\theta$ to $h(\theta)$, then $h(\widehat{\theta}_{\text{ML}})$ will be the ML estimator of $h(\theta)$. The following example shows a case where mean invariance is different from the invariance of ML.

\begin{example}
Consider $X_n \overset{ \text{\tiny{i.i.d.}}}{\sim} \text{Bernoulli}(\theta)$ and $Y_n \overset{ \text{\tiny{i.i.d.}}}{\sim} \text{Bernoulli}(e^\theta)$ for $n = 1,\ldots,N$. The ML estimator of $\theta$ using $X_1,\ldots,X_N$ is
\begin{equation*}
\widehat{\theta}_{\text{ML}}(\overline{X}) = \overline{X}.
\end{equation*}
According to the invariance of the ML estimator, a monotonic mapping $h_1(\cdot) = e^{(\cdot)}$ will ensure that $\widehat{\theta}_1 \bydef h_1(\widehat{\theta}_{\text{ML}}) = e^{\overline{X}}$ is the ML estimator of $e^{\theta}$. By the same principle, if there is a different monotonic mapping $\widehat{\theta}_2 \bydef h_2(\cdot) = \log(\cdot)$, it holds that $h_2(\widehat{\theta}_{\text{ML}}) = \log(\overline{X})$ is the ML estimator of $\log(\theta)$.

Now consider the mean $\mu(\theta) \bydef \E[Y] = e^{\theta}$. The invariance principle says nothing about whether $\mu(\widehat{\theta}_{1}(\overline{Y})) = \overline{Y}$ or $\mu(\widehat{\theta}_{2}(\overline{Y})) = \overline{Y}$. In fact, $\widehat{\theta}_{1}$ does not satisfy the mean invariance property because $\mu(\widehat{\theta}_{1}(\overline{Y})) = e^{\thetahat_1(\overline{Y})} = e^{e^{\overline{Y}}} \not= \overline{Y}$. However, $\widehat{\theta}_{2}$ satisfies the mean invariance property because $\mu(\widehat{\theta}_{2}(\overline{Y})) = e^{\log(\overline{Y})} = \overline{Y}$. Therefore, the invariance principle of the ML estimator is completely different from the mean invariance property. \hfill $\square$
\end{example}

To summarize, the mean invariance is a property that specifically focuses on whether the mean $\E[Y]$ can be nonlinearly mapped to recover the true parameter $\theta$. This is the property required for the exposure-referred SNR. Whether the estimator is the ML estimator is not of concern.

In the statistics literature, the mean invariance property presented in this paper is related to the \emph{link function} for the generalized linear models. Specifically, the mapping $\mu(\theta) \bydef \E[Y]$ from the true parameter $\theta$ to the mean $\mu$ is known as the link function, and the inverse mapping $\mu^{-1}$ is known as the response function. Readers interested in details of this topic can consult \cite{McCullagh_Nelder_1983}.

\subsection{Exposure-referred SNR}
With all the mathematical tools ready, the exposure-referred SNR can now be formally derived.

\begin{theorem}
\label{thm: SNR exp}
Let $Y_1,\ldots,Y_N$ be i.i.d. random variables drawn from the probability density function $p_Y(y;\theta)$. Define $\overline{Y} = (1/N)\sum_{n=1}^N Y_n$. Let $\mu(\theta) \bydef \E[Y]$ and assume that $\mu^{-1}$ exists. Let $\thetahat(\mY)$ be the mean invariant estimator such that $\thetahat(\mY) = \mu^{-1}(\overline{Y})$. Then the SNR defined in \eref{eq: SNR definition} is related to $\text{SNR}_{\text{exp}}$ as
\begin{equation}
\text{SNR}(\theta) \approx \text{SNR}_{\text{exp}}(\theta) = \frac{\theta}{\sqrt{\Var[\overline{Y}]}} \cdot \frac{d\mu}{d\theta}.
\end{equation}
\end{theorem}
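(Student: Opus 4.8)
The plan is to reduce the entire statement to a single application of the Delta Method (the Lemma proved above), using mean invariance to rewrite the estimator as a smooth function of the sample average. First I would start from the formal SNR in \eref{eq: SNR definition}, so that the only nontrivial quantity to control is the noise power $\E[(\thetahat(\mY) - \theta)^2]$. Since $\thetahat(\mY) = \mu^{-1}(\overline{Y})$ by the mean invariance hypothesis, and since $\theta = \mu^{-1}(\mu(\theta))$ because $\mu^{-1}$ is the genuine inverse of $\mu$, the noise power becomes $\E[(\mu^{-1}(\overline{Y}) - \mu^{-1}(\mu(\theta)))^2]$, which is exactly the left-hand side of the Delta Method identity with $f = \mu^{-1}$ and the underlying mean taken to be $\mu(\theta) = \E[Y]$.

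Next I would invoke the Delta Method of \eref{eq: delta method equation}: applied with $f = \mu^{-1}$ evaluated at the mean $\mu(\theta)$, it yields
\begin{equation*}
\E[(\thetahat(\mY) - \theta)^2] \approx [(\mu^{-1})'(\mu(\theta))]^2 \, \Var[\overline{Y}].
\end{equation*}
The remaining step is purely to simplify the derivative factor. By the inverse function theorem, $(\mu^{-1})'(s) = 1/\mu'(\mu^{-1}(s))$; evaluating at $s = \mu(\theta)$ and using $\mu^{-1}(\mu(\theta)) = \theta$ gives $(\mu^{-1})'(\mu(\theta)) = 1/\mu'(\theta)$. Substituting this into the displayed approximation produces $\E[(\thetahat(\mY) - \theta)^2] \approx \Var[\overline{Y}]/(d\mu/d\theta)^2$, and feeding this back into the definition of the SNR yields
\begin{equation*}
\text{SNR}(\theta) \approx \frac{\theta}{\sqrt{\Var[\overline{Y}]}} \cdot \frac{d\mu}{d\theta},
\end{equation*}
which is the claimed $\text{SNR}_{\text{exp}}(\theta)$. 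Here $d\mu/d\theta > 0$ since $\mu$ must be strictly monotone for $\mu^{-1}$ to exist, so no absolute value is needed.

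The main obstacle is not the algebra but verifying that the Delta Method genuinely applies and being precise about why the conclusion is an approximation rather than an equality. Concretely, I would check its two hypotheses in this setting: (1) the Central Limit Theorem gives $\sqrt{N}(\overline{Y} - \mu(\theta)) \overset{d}{\rightarrow} \text{Gaussian}(0, \Var[Y])$, which requires $Y$ to have finite variance, true for all the truncated and quantized Poisson--Gaussian models in the paper because $Y$ is bounded; and (2) $\mu^{-1}$ is continuously differentiable at $\mu(\theta)$ with nonzero derivative, which is exactly the condition $\mu'(\theta) \neq 0$ guaranteed by the strict monotonicity of $\mu$. The ``$\approx$'' then inherits the same meaning as in the Lemma: the first-order Taylor linearization of $\mu^{-1}$ around $\mu(\theta)$ discards an $o_p$ remainder, so the identity is asymptotically exact as $N \rightarrow \infty$ and an approximation for finite $N$. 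Finally I would note that $\Var[\overline{Y}] = \Var[Y]/N$ reconciles the theorem's form with the $\sqrt{N}$ factors appearing in the earlier Examples.
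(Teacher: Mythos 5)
Your proof is correct and follows essentially the same route as the paper's: rewrite the mean squared error as $\E[(\mu^{-1}(\overline{Y}) - \mu^{-1}(\mu(\theta)))^2]$ using mean invariance, apply the Delta Method with $f = \mu^{-1}$ at the point $\mu(\theta)$, and convert $(\mu^{-1})'(\mu(\theta))$ into $1/(d\mu/d\theta)$ via the inverse function relation. Your explicit verification of the Delta Method hypotheses (finite variance for the CLT, nonzero $\mu'(\theta)$) and of the meaning of the approximation is a welcome refinement of what the paper leaves implicit, but the underlying argument is the same.
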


\begin{proof}
By the Delta Method, the mean squared error can be approximated by
\begin{align*}
\E\left[\left(\thetahat(\overline{Y}) - \theta\right)^2\right]
&= \E\left[\left(\thetahat(\overline{Y}) - \thetahat(\mu(\theta))\right)^2\right] \\
&\approx \left[\thetahat'(\mu(\theta))\right]^2\Var[\overline{Y}],
\end{align*}
where the derivative is taken with respect to $\mu(\theta)$.

Since $\thetahat(\mu(\theta)) = \mu^{-1}(\mu(\theta)) = \theta$, it follows that $\frac{d\thetahat(\mu)}{d\mu} = \frac{d\theta}{d\mu}$. So,
\begin{align}
\E\left[\left(\thetahat(\overline{Y}) - \theta\right)^2\right]  = \left[\frac{d\theta}{d\mu}\right]^2\Var[\overline{Y}].
\label{eq: proof step}
\end{align}
Using the fact that $\frac{d\theta}{d\mu} = 1/\frac{d\mu}{d\theta}$, the SNR can be written as
\begin{align*}
\text{SNR}(\theta) = \frac{\theta}{\sqrt{\E\left[\left(\thetahat(\overline{Y}) - \theta\right)^2\right] }} \approx
\underset{\text{SNR}_{\text{exp}}(\theta)}{\underbrace{\frac{\theta}{\sqrt{\Var[\overline{Y}]}} \cdot \frac{d\mu}{d\theta}}},
\end{align*}
which completes the proof.
\end{proof}

\begin{corollary}
Under the same conditions listed in Theorem~\ref{thm: SNR exp}, the exposure-referred SNR is related to the output-referred SNR as
\begin{equation}
\text{SNR}_{\text{exp}}(\theta) = \text{SNR}_{\text{out}}(\theta) \cdot \frac{\theta}{\mu} \cdot \frac{d\mu}{d\theta}.
\end{equation}
\end{corollary}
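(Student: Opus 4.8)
The plan is to obtain the corollary by a purely algebraic rearrangement of the expression for $\text{SNR}_{\text{exp}}(\theta)$ already established in Theorem~\ref{thm: SNR exp}, inserting a factor of $\mu/\mu$ so that the output-referred SNR surfaces as a recognizable block. Starting from
\begin{equation*}
\text{SNR}_{\text{exp}}(\theta) = \frac{\theta}{\sqrt{\Var[\overline{Y}]}} \cdot \frac{d\mu}{d\theta},
\end{equation*}
I would first multiply numerator and denominator by $\mu = \E[\overline{Y}]$, which is legitimate since $\mu > 0$ for any nondegenerate mean, to rewrite the right-hand side as
\begin{equation*}
\text{SNR}_{\text{exp}}(\theta) = \frac{\mu}{\sqrt{\Var[\overline{Y}]}} \cdot \frac{\theta}{\mu} \cdot \frac{d\mu}{d\theta}.
\end{equation*}

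The key observation is that the leading factor $\mu/\sqrt{\Var[\overline{Y}]}$ is exactly the output-referred SNR of \eref{eq: SNR out}, once one notes that $\E[\overline{Y}] = \E[Y] = \mu$ because $\overline{Y}$ is an unbiased sample average. Substituting $\text{SNR}_{\text{out}}(\theta) = \mu/\sqrt{\Var[\overline{Y}]}$ into the expression above produces precisely the claimed identity, so the proof collapses to recognizing a definition rather than deriving a new estimate. No appeal to the Delta Method, Stirling's formula, or any limiting argument is required beyond the approximation already inherited from Theorem~\ref{thm: SNR exp}, which is why the statement is an equality rather than an approximation.

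The only real subtlety — and the point I would be careful to state explicitly — is reconciling the per-sample form of $\text{SNR}_{\text{out}}$ written in \eref{eq: SNR out} as $\E[Y]/\sqrt{\Var[Y]}$ with the sample-average quantity $\Var[\overline{Y}]$ that appears throughout this section. Within the $N$-sample estimation framework of Definition~\eref{eq: SNR definition}, the output-referred SNR is understood with $\overline{Y}$ playing the role of the measured output, i.e.\ $\text{SNR}_{\text{out}}(\theta) = \E[\overline{Y}]/\sqrt{\Var[\overline{Y}]}$; with this reading the factor $\sqrt{N} = \sqrt{\Var[Y]/\Var[\overline{Y}]}$ is carried consistently in both SNRs and the identity holds term-for-term. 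Making this bookkeeping transparent is the entire content of the argument, so I expect the write-up to be short, with the lone hazard being an accidental $\sqrt{N}$ mismatch if the two SNRs are evaluated on different random variables.
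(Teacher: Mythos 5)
Your proposal is correct and follows essentially the same route as the paper: the paper's proof is precisely the insertion of the factor $\mu/\mu$ into the Theorem~\ref{thm: SNR exp} expression, recognizing $\mu/\sqrt{\Var[\overline{Y}]}$ as $\text{SNR}_{\text{out}}(\theta)$. Your extra remark about reading $\text{SNR}_{\text{out}}$ on the sample average $\overline{Y}$ so that the $\sqrt{N}$ factors match is a sensible clarification of a point the paper leaves implicit, but it does not change the argument.
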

\begin{proof}
The proof follows from the substitution
\begin{align*}
\text{SNR}_{\text{exp}}(\theta) = \frac{\theta}{\sqrt{\Var[\overline{Y}]}} \cdot \frac{d\mu}{d\theta}
&= \underset{=\text{SNR}_{\text{out}}(\theta)}{\underbrace{\frac{\mu}{\sqrt{\Var[\overline{Y}]}}}} \cdot \frac{\theta}{\mu} \cdot \frac{d\mu}{d\theta}.
\end{align*}
This completes the proof.
\end{proof}

As one can see from \eref{eq: proof step}, the derivative $d\mu/d\theta$ is added because of the Delta Method. It is the first-order approximation of a nonlinear mapping from the output to the input. Using the argument of Elgendy and Chan \cite{elgendy2018optimal}, this first-order derivative can be regarded as a transfer function relating the output $\E[\overline{Y}]$ to the input $\theta$. If the input-output has a linear relationship $\E[\overline{Y}] = \theta$, which is the case of a CIS with a large full-well capacity, then the derivative is $d\mu/d\theta = 1$ and so $\text{SNR}_{\text{out}}(\theta) = \text{SNR}_{\text{exp}}(\theta)$.

\subsection{Illustrating the SNR via one-bit QIS}
To elaborate on the difference between $\text{SNR}_{\text{exp}}(\theta)$ and $\text{SNR}_{\text{out}}(\theta)$, it would be instructive to consider the statistics of a one-bit quanta image sensor. Let $X_1,\ldots,X_N \overset{\text{\tiny i.i.d.}}{\sim} \text{Poisson}(\theta)$ and let $Y_n$ be the random variables defined in \eref{eq: QIS model}.

First, consider the case where $q = 1$. Since $Y_1,\ldots,Y_N \overset{\text{\tiny i.i.d.}}{\sim} \text{Bernoulli}(1-e^{-\theta})$, the mean is $\E[Y] = 1-e^{-\theta}$. Define the mean as $\mu(\theta) = \E[Y] = 1-e^{-\theta}$. As shown in Example~\ref{example MI one bit}, the maximum-likelihood estimate of $\theta$ is $\thetahat(\overline{Y}) = -\log(1-\overline{Y})$ and it satisfies the mean invariance property. The derivative $d\mu/d\theta$ is
\begin{equation*}
\frac{d\mu}{d\theta} = \frac{d}{d\theta}\left[1-e^{-\theta}\right] = e^{-\theta}.
\end{equation*}
Substituting into Theorem~\ref{thm: SNR exp}, it can be shown that
\begin{align*}
\text{SNR}_{\text{exp}}(\theta)
= \frac{\theta}{\sqrt{\Var[\overline{Y}]}} \cdot \frac{d\mu}{d\theta} = \sqrt{N} \cdot \theta \cdot \sqrt{\frac{e^{-\theta}}{1-e^{-\theta}}}.
\end{align*}
For cases where $q > 1$, one can use the incomplete Gamma function so that
\begin{equation*}
p_Y(y;\theta) =
\begin{cases}
1-\Psi_q(\theta), &\qquad y = 1,\\
\Psi_q(\theta),   &\qquad y = 0.
\end{cases}
\end{equation*}
It then follows that $\E[Y] = 1-\Psi_q(\theta)$ and the estimator can be chosen such that $\thetahat(\overline{Y}) = \Psi_q^{-1}(1-\overline{Y})$. The mean invariance property is therefore validated. The derivative $d\mu/d\theta$ is
\begin{equation*}
\frac{d\mu}{d\theta} = \frac{d}{d\theta}(1-\Psi_q(\theta)) = \frac{\theta^{q-1}e^{-\theta}}{(q-1)!}.
\end{equation*}
Hence, the SNR is
\begin{align}
\text{SNR}_{\text{exp}}(\theta) = \frac{\sqrt{N} \cdot \theta}{\sqrt{\Psi_q(\theta)(1-\Psi_q(\theta))}} \cdot \frac{\theta^{q-1}e^{-\theta}}{(q-1)!},
\label{eq: SNR exp QIS}
\end{align}
of which the visualization is shown in \fref{fig: ch3 SNR exposure}. This result is consistent with the one shown by Elgendy and Chan by deriving the Fisher Information \cite{elgendy2018optimal}.

\begin{figure}[ht]
\centering
\includegraphics[width=\linewidth]{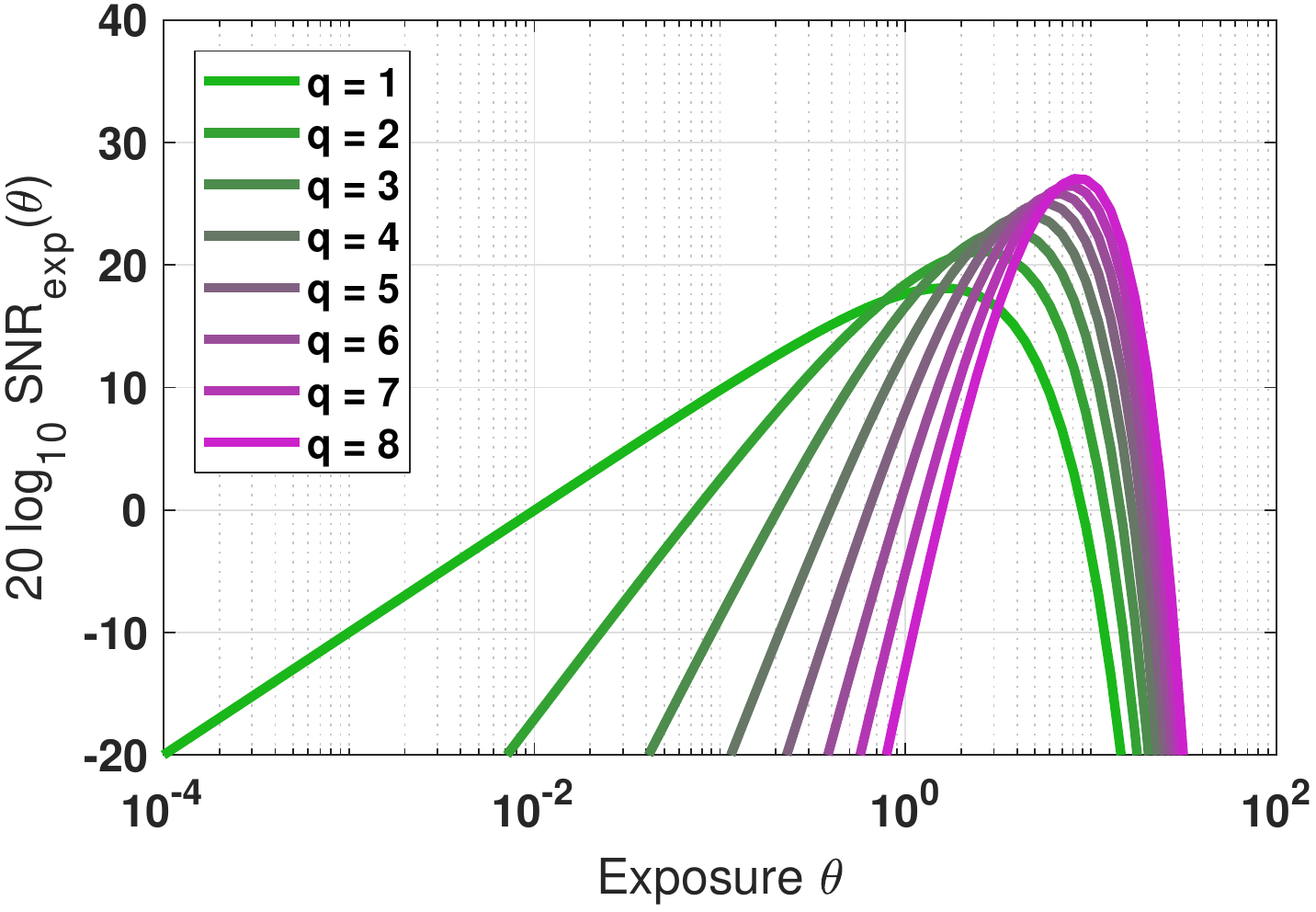}
\vspace{-4ex}
\caption{Exposure-referred SNR for a one-bit quanta image sensor, assuming that $\sigma_{\text{read}} = 0$. As $\theta$ goes beyond the threshold $q$, $\text{SNR}_{\text{exp}}(\theta)$ starts to drop as expected.}
\label{fig: ch3 SNR exposure}
\end{figure}

Unlike $\text{SNR}_{\text{exp}}(\theta)$, the output-referred SNR goes to infinity when $\theta$ grows. For the same one-bit statistics, the output-referred SNR is simply the ratio between $\E[\overline{Y}]$ and $\Var[\overline{Y}]$, which is
\begin{equation}
\text{SNR}_{\text{out}}(\theta) = \frac{\E[\overline{Y}]}{\sqrt{\Var[\overline{Y}]}} = \sqrt{N} \cdot \sqrt{\frac{1-\Psi_q(\theta)}{\Psi_q(\theta)}}.
\end{equation}
As shown in \fref{fig: ch3 SNR output}, $\text{SNR}_{\text{out}}(\theta)$ grows indefinitely as $\theta$ grows. This does not reflect the reality because when $\theta$ grows beyond the threshold $q$, the measurements $\{Y_n \,|\, n = 1,\ldots,N\}$ will have more one's. The signal degrades and hence eventually the SNR drops to zero.

\begin{figure}[ht]
\centering
\includegraphics[width=\linewidth]{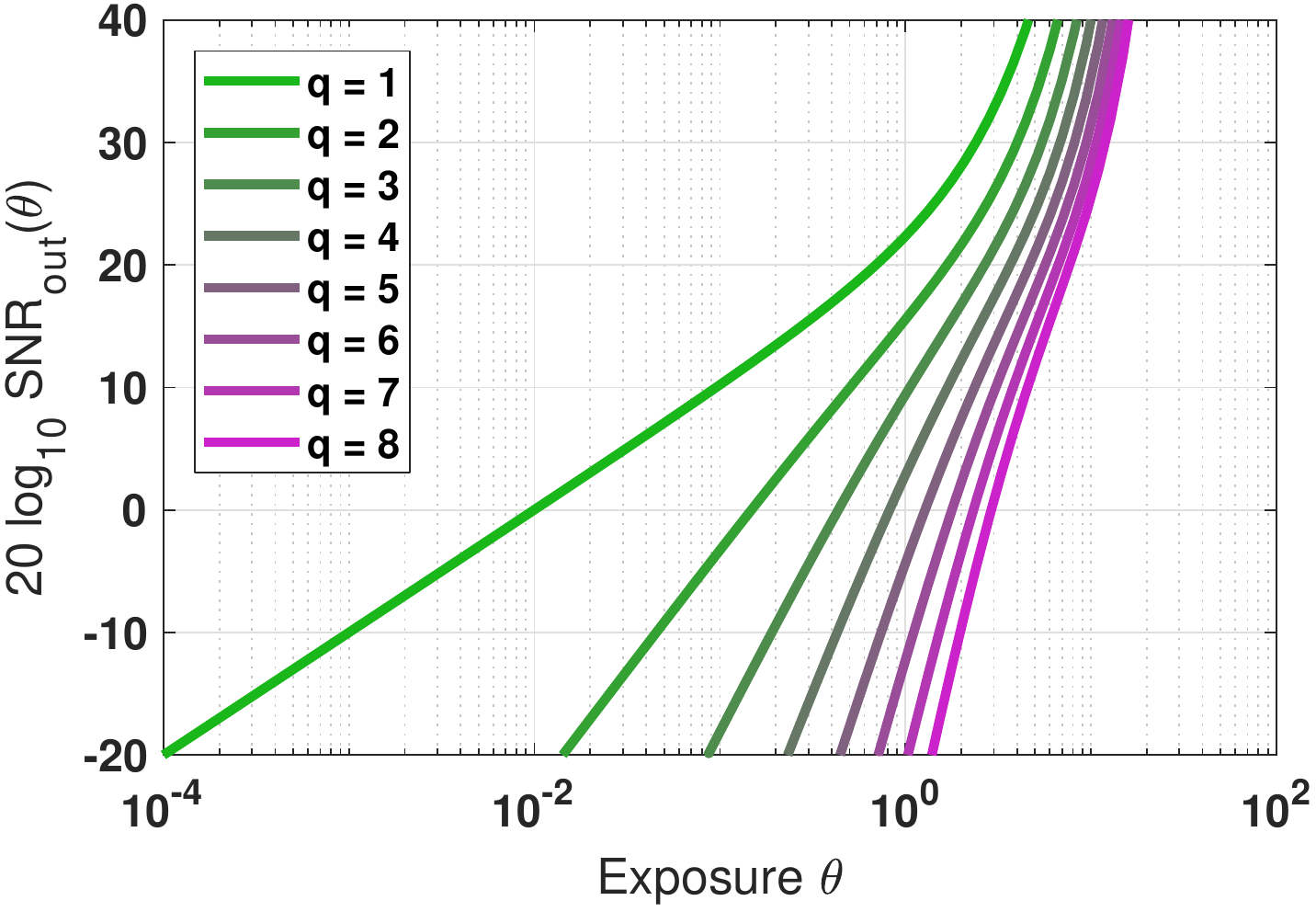}
\vspace{-4ex}
\caption{Output-referred SNR for a one-bit quanta image sensor, assuming that $\sigma_{\text{read}} = 0$. As $\theta$ goes beyond the threshold $q$, $\text{SNR}_{\text{out}}(\theta)$ continues to grow because of the inability of $\text{SNR}_{\text{out}}(\theta)$ to handle pixel saturation.}
\label{fig: ch3 SNR output}
\end{figure}

\section{$\text{SNR}_{\text{exp}}(\theta)$ for Finite full-well Capacity}
The subject of this paper is to understand the exposure-referred SNR for digital (CCD and CMOS) image sensors with a finite full-well capacity $L$. In particular, the goal is to understand the situation when $L$ is small, e.g., a few bits. This section presents the main result for such a scenario.

\subsection{$\text{SNR}_{\text{exp}}(\theta)$ for Truncated Poisson}
To make the analysis tractable, the derivation in this section will be focusing on a truncated Poisson distribution assuming $\sigma_{\text{read}} = 0$. Extension to the more complex noise model will be analyzed later.

\begin{theorem}[$\text{SNR}_{\text{exp}}(\theta)$ for truncated Poisson]
\label{thm: Ch01-4 SNR clipped Poisson}
Let $Y_1,\ldots,Y_N$ be i.i.d. random variables following the truncated Poisson statistics defined in \eref{eq: truncated Poisson} where $X_n \overset{\text{\tiny i.i.d.}}{\sim} \text{Poisson}(\theta)$ for $n = 1,\ldots,N$. Let $\thetahat(\overline{Y})$ be an estimator satisfying the mean invariance property. Then the exposure-referred SNR is
\begin{equation}
\text{SNR}_{\text{exp}}(\theta) = \sqrt{N} \cdot \frac{\theta}{\sqrt{\Var[Y_n]}} \cdot \frac{d\mu}{d\theta},
\label{eq: SNR exp main}
\end{equation}
where
\begin{align}
\E[Y]    &= \theta \Psi_{L-1}(\theta) + L (1-\Psi_{L}(\theta)) \bydef \mu,\notag\\
\Var[Y]  &= \theta^2 \Psi_{L-2}(\theta) + \theta\Psi_{L-1}(\theta) + L^2(1-\Psi_L(\theta)) - \mu^2,\notag\\
\frac{d\mu}{d\theta} &= \theta \Psi_{L-1}'(\theta) + \Psi_{L-1}(\theta)- L\Psi_{L}'(\theta). \label{eq: SNR exp main components}
\end{align}
\end{theorem}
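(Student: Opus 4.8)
The plan is to recognize that the scalar SNR formula in \eref{eq: SNR exp main} is an immediate corollary of Theorem~\ref{thm: SNR exp}, so that the real content of the proof lies in deriving the three closed-form expressions for $\mu$, $\Var[Y]$, and $d\mu/d\theta$ collected in \eref{eq: SNR exp main components}. For the SNR formula itself I would simply invoke Theorem~\ref{thm: SNR exp}, which gives $\text{SNR}_{\text{exp}}(\theta) = (\theta/\sqrt{\Var[\overline{Y}]})\cdot(d\mu/d\theta)$ for any mean-invariant estimator, and then use the i.i.d. assumption to write $\Var[\overline{Y}] = \Var[Y_n]/N$. Substituting $\sqrt{\Var[\overline{Y}]} = \sqrt{\Var[Y_n]}/\sqrt{N}$ produces the stated $\sqrt{N}$ factor with no further work.

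The core of the argument is a direct moment computation against the truncated Poisson pmf. From \eref{eq: truncated Poisson}, $Y$ carries the ordinary Poisson weights $\theta^y e^{-\theta}/y!$ for $y = 0,\ldots,L-1$ and places the remaining mass $1-\Psi_L(\theta) = \Pb(X \ge L)$ at the saturation value $Y = L$. To obtain $\mu = \E[Y]$ I would split the expectation into the truncated sum plus the saturation term $L(1-\Psi_L(\theta))$. In the sum $\sum_{y=0}^{L-1} y\,\theta^y e^{-\theta}/y!$ the $y=0$ term vanishes, one factor of $\theta$ pulls out, and the reindexing $k = y-1$ turns the remainder into $\sum_{k=0}^{L-2}\theta^k e^{-\theta}/k!$, which is exactly $\Psi_{L-1}(\theta)$ by the definition of the incomplete Gamma function. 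This yields $\mu = \theta\Psi_{L-1}(\theta) + L(1-\Psi_L(\theta))$.

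For the variance I would compute $\E[Y^2]$ by the same device, using the factorial-moment identity $y^2 = y(y-1)+y$. The linear part reproduces the $\theta\Psi_{L-1}(\theta)$ term already found; the quadratic part $\sum y(y-1)\theta^y e^{-\theta}/y!$ loses its $y=0$ and $y=1$ terms, so that pulling out $\theta^2$ and reindexing $k=y-2$ collapses it to $\theta^2\Psi_{L-2}(\theta)$. Adding the saturation contribution $L^2(1-\Psi_L(\theta))$ and subtracting $\mu^2$ gives the stated $\Var[Y]$. The derivative $d\mu/d\theta$ then follows by differentiating $\mu = \theta\Psi_{L-1}(\theta) + L - L\Psi_L(\theta)$ term by term, applying the product rule to the first term; the derivative identity $\Psi_L'(\theta) = -\theta^{L-1}e^{-\theta}/(L-1)!$ established earlier supplies any explicit form needed.

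Since the computations are otherwise mechanical, the only delicate point, and the main obstacle to watch, is the index bookkeeping: one must verify that truncation at $L-1$ shifts consistently to $\Psi_{L-1}$ after one reindexing and to $\Psi_{L-2}$ after two, and that small-$L$ edge cases are covered by the convention that an empty sum gives $\Psi_0(\theta) = 0$. With that caveat the three expressions in \eref{eq: SNR exp main components} follow, and \eref{eq: SNR exp main} is obtained by the substitution described above.
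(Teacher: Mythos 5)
Your proposal is correct and follows essentially the same route as the paper's own proof: reduce to $N=1$ via $\Var[\overline{Y}] = \Var[Y_1]/N$ and the general formula of Theorem~\ref{thm: SNR exp}, then compute $\E[Y]$ and $\E[Y^2]$ directly against the truncated Poisson pmf, reindexing the sums into $\Psi_{L-1}$ and $\Psi_{L-2}$ (your factorial-moment split $y^2 = y(y-1)+y$ is the same algebra as the paper's $(k-1+1)$ decomposition), and obtain $d\mu/d\theta$ by the product rule. Your explicit attention to the small-$L$ edge cases via the empty-sum convention is a detail the paper leaves implicit, but it does not change the argument.
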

\begin{proof}
The proof is presented in the Appendix.
\end{proof}

To illustrate the predicted $\text{SNR}_{\text{exp}}(\theta)$ as a function of $\theta$, \fref{fig: SNR_truncated_Poisson} shows several curves evaluated at different full-well capacity $L$. As is consistent with the one-bit QIS example shown in Section III.D, the exposure-referred SNR for a truncated Poisson random variable also demonstrates a drop in $\text{SNR}_{\text{exp}}(\theta)$ after the pixel saturates. What is more interesting is that as $L$ increases, $\text{SNR}_{\text{exp}}(\theta)$ becomes a straight line in the log-log plot with a sharp decay after saturation.

\begin{figure}[ht]
\centering
\includegraphics[width=\linewidth]{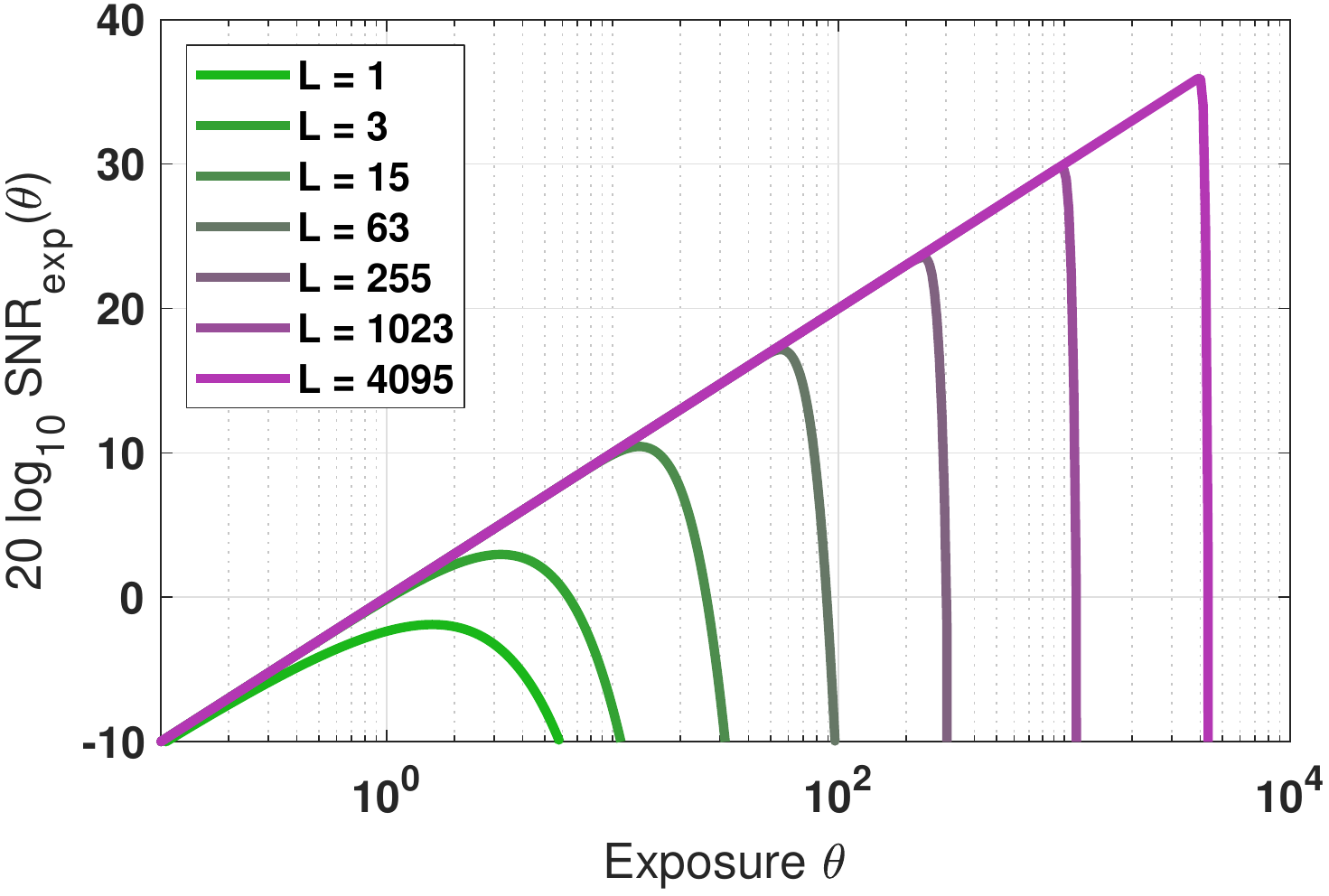}
\vspace{-4ex}
\caption{Exposure-referred SNR for a digital image sensor with a full-well capacity of $L$ electrons.}
\label{fig: SNR_truncated_Poisson}
\end{figure}

The rapid drop after the saturation is attributed to two reasons. First, as explained in the remark in Section II.C, the log-log plot has a compression of the $x$-axis so that the slope is amplified with $\theta$. If one plots the $x$-axis in the linear scale (instead of the log scale), the sharp cutoff will appear in a smoother transition. However, since in practice the exposure is always shown in the log scale, what is being shown in \fref{fig: SNR_truncated_Poisson} is valid. The second reason for the drop after the saturation is due to the limiting behavior of the incomplete Gamma function. As $L$ increases, the incomplete Gamma function in the log-log plot will have an increasingly sharp transient as shown in \fref{fig: incomplete gamma function}. This will be shown theoretically in the next subsection.

\subsection{Limiting Case}
\fref{fig: SNR_truncated_Poisson} shows that as the full-well capacity $L$ increases, $\text{SNR}_{\text{exp}}(\theta)$ becomes more linear in the log-log plot. Such a behavior can be theoretically derived by analyzing the limiting cases of the incomplete Gamma function. The log-log plot requires the $x$-axis to be mapped from $\theta$ to $\log_{10}\theta$. In this case, define $\phi = \log_{10}\theta$ (so that $\theta = 10^\phi$) and it can be shown that
\begin{align}
\Psi_L(10^\phi)
\approx
\begin{cases}
1, &\qquad \phi \le \log_{10} L,\\
0, &\qquad \phi > \log_{10} L,
\end{cases}
\label{eq: limit Psi}
\end{align}
assuming that $L \gg 1$.

\begin{corollary}
\label{corollary: limiting case}
Consider the same conditions as in Theorem~\ref{thm: Ch01-4 SNR clipped Poisson} but with $L \gg 1$. Let $\Psi = \Psi_L(\theta)$. Under the limiting assumption of $\Psi_L(\theta)$ described in \eref{eq: limit Psi}, it holds that
\begin{align}
20\log_{10} \text{SNR}_{\exp}(10^\phi)
=
\begin{cases}
10\phi,             &\qquad \phi \le \log_{10} L,\\
-\infty,            &\qquad \phi > \log_{10} L.
\end{cases}
\label{eq: HDR SNR}
\end{align}
\end{corollary}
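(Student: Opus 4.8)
The plan is to substitute the limiting step-function behavior of $\Psi_L$ from \eref{eq: limit Psi} into the three quantities $\mu$, $\Var[Y]$, and $d\mu/d\theta$ of Theorem~\ref{thm: Ch01-4 SNR clipped Poisson}, and then evaluate $\text{SNR}_{\text{exp}}(\theta) = \sqrt{N}\,\theta\,(d\mu/d\theta)/\sqrt{\Var[Y]}$ separately in the two regimes $\phi\le\log_{10}L$ and $\phi>\log_{10}L$. A preliminary simplification that makes everything transparent is the identity $d\mu/d\theta = \Psi_L(\theta)$: substituting $\Psi_{L-1}'(\theta) = -\theta^{L-2}e^{-\theta}/(L-2)!$ and $\Psi_L'(\theta) = -\theta^{L-1}e^{-\theta}/(L-1)!$ into \eref{eq: SNR exp main components}, the two derivative terms combine to $\theta\Psi_{L-1}'(\theta) - L\Psi_L'(\theta) = -\Psi_L'(\theta)$, so that $d\mu/d\theta = \Psi_{L-1}(\theta) - \Psi_L'(\theta) = \Psi_L(\theta)$. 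Because $L\gg1$, the transition thresholds of $\Psi_{L-2}$, $\Psi_{L-1}$, and $\Psi_L$ all collapse to $\phi=\log_{10}L$, so \eref{eq: limit Psi} may be applied to all three simultaneously.

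In the first regime ($\phi\le\log_{10}L$, i.e.\ $\theta\le L$), \eref{eq: limit Psi} gives $\Psi_{L-2}\approx\Psi_{L-1}\approx\Psi_L\approx1$ while the Poisson tail $\theta^{L-1}e^{-\theta}/(L-1)!\approx0$. Plugging these into \eref{eq: SNR exp main components} yields $\mu\approx\theta$, $\Var[Y]\approx\theta^2+\theta-\theta^2=\theta$ (the usual Poisson variance), and $d\mu/d\theta\approx1$. Hence $\text{SNR}_{\text{exp}}(\theta)\approx\sqrt{N}\,\theta/\sqrt{\theta}=\sqrt{N\theta}$, and $20\log_{10}\sqrt{N\theta}=10\phi+10\log_{10}N$, the claimed line of slope $10$ (the additive constant $10\log_{10}N$ is a vertical offset that vanishes when $N=1$).

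In the second regime ($\phi>\log_{10}L$, i.e.\ $\theta>L$) lies the main difficulty. The step function now sends $\Psi_{L-2}\approx\Psi_{L-1}\approx\Psi_L\approx0$, giving $\mu\approx L$, $d\mu/d\theta=\Psi_L\approx0$, and simultaneously $\Var[Y]\approx L^2-L^2=0$, so the formula produces the indeterminate form $\sqrt{N}\,\theta\cdot0/\sqrt{0}$: the crude step-function limit is by itself insufficient. I would resolve this one level below the idealization, using the Poisson lower-tail asymptotics for $\theta>L$. The numerator factor $d\mu/d\theta=\Psi_L(\theta)$ is dominated by its largest term $p_{L-1}(\theta)=\theta^{L-1}e^{-\theta}/(L-1)!=-\Psi_L'(\theta)$, whereas the variance is bounded below by the single-atom contribution $\Var[Y]\ge(L-1-\mu)^2\,p_{L-1}(\theta)$ with $\mu\to L$, so $\Var[Y]\gtrsim p_{L-1}(\theta)\asymp\Psi_L(\theta)$. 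Consequently $\text{SNR}_{\text{exp}}^2(\theta)=N\theta^2\,\Psi_L(\theta)^2/\Var[Y]\lesssim N\theta^2\,\Psi_L(\theta)$, and since $\Psi_L(\theta)$ decays exponentially in $\theta$ it overwhelms the polynomial factor $\theta^2$, forcing $\text{SNR}_{\text{exp}}(\theta)\to0$ and hence $20\log_{10}\text{SNR}_{\text{exp}}(10^\phi)\to-\infty$ throughout $\phi>\log_{10}L$, matching \eref{eq: HDR SNR}.

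The hard part, as flagged, is exactly this $0/0$ resolution: one must show that the variance vanishes only as fast as $\Psi_L$ (linearly), whereas the squared numerator vanishes like $\Psi_L^2$, so that the ratio still collapses to zero. Everything else reduces to substituting \eref{eq: limit Psi} and using the clean derivative identity $d\mu/d\theta=\Psi_L(\theta)$.
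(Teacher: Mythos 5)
Your proposal is correct, and regime one ($\phi \le \log_{10}L$) is handled exactly as in the paper: substitute $\Psi_{L-2}\approx\Psi_{L-1}\approx\Psi_L\approx 1$ into Theorem~\ref{thm: Ch01-4 SNR clipped Poisson} to get $\mu\approx\theta$, $\Var[Y]\approx\theta$, $d\mu/d\theta\approx 1$, hence the $10\phi$ line (your bookkeeping of the $10\log_{10}N$ offset is slightly more careful than the paper's, which tacitly sets $N=1$). Where you genuinely diverge is the $0/0$ resolution for $\phi>\log_{10}L$. The paper never leaves the idealized world: it treats $\Psi$ as a small symbolic parameter, sets $\Psi_{L-2}\approx\Psi_{L-1}\approx\Psi_L=\Psi$ and $\Psi_L'\approx 0$ in the exact formulas, expands to first order to get $\mu\approx\theta\Psi+L$, $d\mu/d\theta\approx\Psi$, $\Var[Y]\approx\theta\Psi(\theta+1-2L)$, and then computes $\lim_{\Psi\to0}\theta\Psi/\sqrt{\theta\Psi(\theta+1-2L)}=\lim_{\Psi\to 0}\sqrt{\theta\Psi/(\theta+1-2L)}=0$. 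You instead (a) prove the exact identity $d\mu/d\theta=\Psi_L(\theta)$ — which is correct, is nowhere in the paper, and is a genuinely cleaner statement of why the derivative dies — and (b) lower-bound the variance by the single-atom contribution $(L-1-\mu)^2 p_{L-1}(\theta)$ with $p_{L-1}(\theta)=-\Psi_L'(\theta)\asymp\Psi_L(\theta)$, concluding $\text{SNR}_{\text{exp}}^2\lesssim N\theta^2\Psi_L(\theta)\to 0$. Both arguments rest on the same structural fact — $\Var[Y]$ vanishes linearly in $\Psi$ while $(d\mu/d\theta)^2$ vanishes quadratically — but yours buys real rigor: the paper's expansion drops the $-L^2\Psi$ term inconsistently (the correct first-order variance is $\Psi[(\theta-L)^2+\theta]$), which is why its ``variance'' $\theta\Psi(\theta+1-2L)$ is formally negative for $L<\theta<2L-1$; your inequality-based route is immune to this. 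What the paper's route buys in exchange is an explicit leading-order constant and a derivation that stays entirely inside the step-function idealization of \eref{eq: limit Psi}; note that your final step (exponential decay of $\Psi_L$ beating $\theta^2$) is sharp only for $\theta$ well above $L$, and for $\theta$ just above $L$ you, like the paper, must ultimately appeal to \eref{eq: limit Psi} itself to declare $\Psi_L\approx 0$ — so the two proofs carry the same idealization, just packaged differently.
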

\begin{proof}
See Appendix.
\end{proof}

The implication of the corollary is that as $L$ increases, plotting $\text{SNR}_{\text{exp}}(\theta)$ in the log-log plot will give a linear response followed by an abrupt transition. This is exactly what is happening in the output-referred SNR. Therefore, Theorem~\ref{thm: Ch01-4 SNR clipped Poisson} is a generalized version of the output-referred SNR curves reported in the literature. For practical algorithms such as those for high dynamic range imaging, \eref{eq: HDR SNR} is very common, for example used in \cite{granados2010optimal}.

\section{Monte Carlo Simulation}
\label{sec: Monte Carlo}
So far, the theoretical derivations have been focusing on the Poisson distribution only. Read noise, dark current, quantization error, and other sources of noise have not been considered. When including these factors, seeking an analytic expression would be significantly more challenging. A more reasonable approach is to resort to numerical schemes to estimate the SNR approximately.

\subsection{General Principle}
In general, the measurement $Y$ generated by an image sensor is the result of a sequence of optical-electronic operations such as
\begin{equation}
Y = \text{clip}\left\{\text{round}\left\{\text{Poisson}(\theta + \theta_{\text{dark}}) + \text{Gauss}(0,\sigma_{\text{read}}^2)\right\}\right\},
\label{eq: forward model}
\end{equation}
where $\theta_{\text{dark}}$ denotes the dark current, ``round'' denotes the analog-to-digital (A/D) conversion, and ``clip'' denotes the saturation due to a finite full-well capacity. Assuming a sufficiently large full-well capacity $L$, the output-referred SNR is given by \cite{granados2010optimal,Gamal_2005_magazine}
\begin{equation}
\text{SNR}_{\text{out}}(\theta) =
\begin{cases}
\frac{\theta}{\sqrt{\theta + \theta_{\text{dark}} + \sigma_{\text{read}}^2}}, &\quad \theta < L,\\
0, &\quad \theta \ge L,
\end{cases}
\label{eq: SNR out dark read}
\end{equation}
where $L$ is the full-well capacity.

To compute $\text{SNR}_{\text{exp}}(\theta)$, the numerical approach is to draw samples from the the distribution defined by the forward model:
\begin{equation}
Y_m = \text{forward model}\left(\theta \; | \; \theta_{\text{dark}}, \sigma_{\text{read}}, L\right),
\label{eq: Monte Carlo}
\end{equation}
for $m = 1,\ldots,M$, where $M$ denotes the number of Monte Carlo samples. As stated in \eref{eq: Monte Carlo}, the $m$th sample $Y_m$ is a function of the underlying signal $\theta$, along with other model parameters. Do not confuse $M$ with the number of i.i.d. measurements $N$ used in the previous subsections when defining the average $\overline{Y}$.

The Monte Carlo sampling scheme goes as follow. Consider $N = 1$. For every $\theta$, the sample average is an estimate of $\E[Y]$ and the sample variance is an estimate of $\Var[Y]$:
\begin{align*}
\widehat{\mu}(\theta) = \frac{1}{M}\sum_{m=1}^M Y_m, \;\;\;\text{and}\;\;\; \widehat{\sigma}^2(\theta) = \frac{1}{M}\sum_{m=1}^M (Y_m - \widehat{\mu})^2.
\end{align*}
Once $\widehat{\mu}(\theta)$ has been determined for every $\theta$, the derivative $d\mu/d\theta$ can be approximated by
\begin{align*}
\frac{d \widehat{\mu} }{d\theta}= \frac{\widehat{\mu}(\theta_{k+1})-\widehat{\mu}(\theta_{k})}{\theta_{k+1}-\theta_k},
\end{align*}
where $\{\theta_k \,|\, \theta_k < \theta_{k+1}, \; k = 1,\ldots,K\}$ is the discrete set of exposures used to evaluate the mean and variance. Consequently, if there are $N$ i.i.d. samples, $\text{SNR}_{\text{exp}}(\theta)$ can be approximately estimated by
\begin{equation}
\widehat{\text{SNR}}_{\text{exp}}(\theta) = \sqrt{N} \cdot \frac{\theta}{\widehat{\sigma}} \cdot \frac{d \widehat{\mu} }{d\theta}.
\end{equation}

%

\subsection{Visualizing the Impacts of $\theta_{\text{dark}}$ and $\sigma_{\text{read}}$}
With the Monte Carlo simulation technique, complex forward models can be visualized. Consider the following two demonstrations.

\begin{example}[Influence of Read Noise] \label{eg: influence of read noise}
The first scenario considers a fixed dark current, full-well capacity, and A/D converter, but a varying read noise level. Let $\theta_{\text{dark}} = 0.016$ (which is consistent with the quanta image sensor \cite{Ma:17}), a full-well capacity of $L = 15$ electrons, and 4-bit A/D converter. The read noise level $\sigma_{\text{read}}$ varies from 0 to 4.5 with a step interval of 0.5. By using $M = 5\times10^{6}$ Monte Carlo samples, the numerically simulated $\text{SNR}_{\text{exp}}(\theta)$ is plotted in \fref{fig: SNR_read}.

\begin{figure}[ht]
\centering
\includegraphics[width=\linewidth]{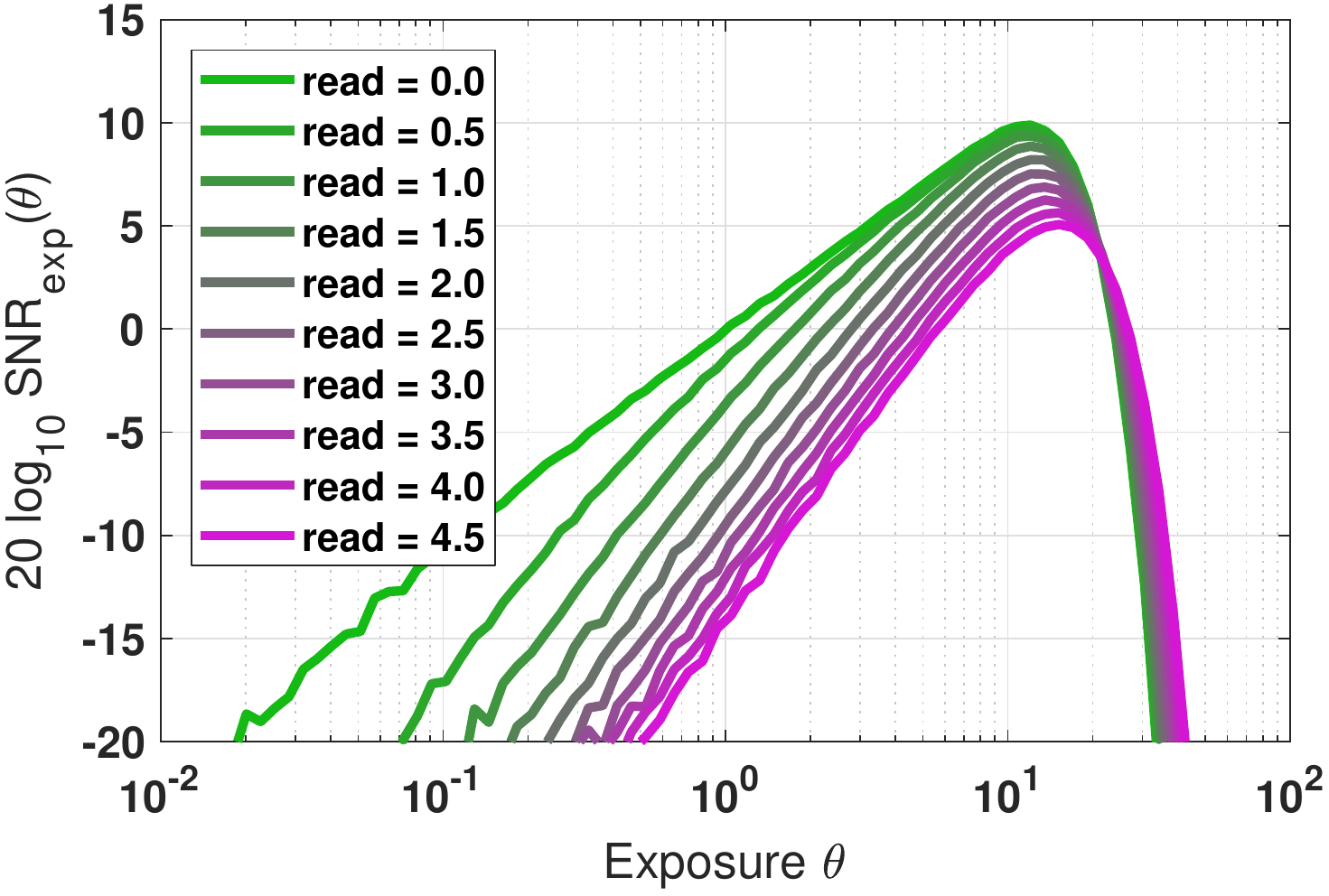}
\vspace{-4ex}
\caption{Exposure-referred SNR for a digital image sensor by considering different levels of read noise. }
\label{fig: SNR_read}
\end{figure}

Increasing the read noise leads to a reduced SNR for all $\theta$ before saturation. After saturation, the read noise will occasionally move a saturated measurement back to an unsaturated state because the Gaussian noise can take a negative value. See that the purple curves on the right-hand side of the plot are higher than the green curves. Therefore, for large $\theta$, there is a minor but noticeable gain in SNR, especially when the read noise is high. This is not necessarily a better outcome, because the increased SNR at larger $\theta$ comes at the cost of significantly lower SNR in low light where the $\theta$ is small.

Remark: The small fluctuation towards the tail in \fref{fig: SNR_read} is due to the randomness in the Monte Carlo simulation. As $M$ goes to infinity, the random estimate will approach the expectation by the law of large number. \hfill $\square$
\end{example}

\begin{example}[Influence of Dark Current]
The second scenario considers a fixed read noise, full-well capacity, and A/D converter, but a varying dark current. To be consistent with the literature, the dark current $\theta_{\text{dark}}$ is assumed to vary from 0 to 0.45 with a step interval of 0.05. The read noise level is fixed at 0.2 based on \cite{Ma:17}. The full-well capacity is $L = 15$ electrons, and a 4-bit A/D converter is used. Same as Example \ref{eg: influence of read noise}, $M = 5\times10^6$ Monte Carlo samples are used to numerically generate the $\text{SNR}_{\text{exp}}(\theta)$ plot in \fref{fig: SNR_dark}.

Unlike Example \ref{eg: influence of read noise} where the read noise has a substantial influence to the SNR, an increased dark current will only show its impact for small $\theta$. This should not be a surprise because when the true signal $\theta$ is strong, the influence of $\theta_{\text{dark}}$ will be negligible considering the small magnitude it usually has. For small $\theta$, the impact of $\theta_{\text{dark}}$ is more prominent. A smaller dark current indeed leads to a higher SNR as expected. \hfill $\square$

\begin{figure}[ht]
\centering
\includegraphics[width=\linewidth]{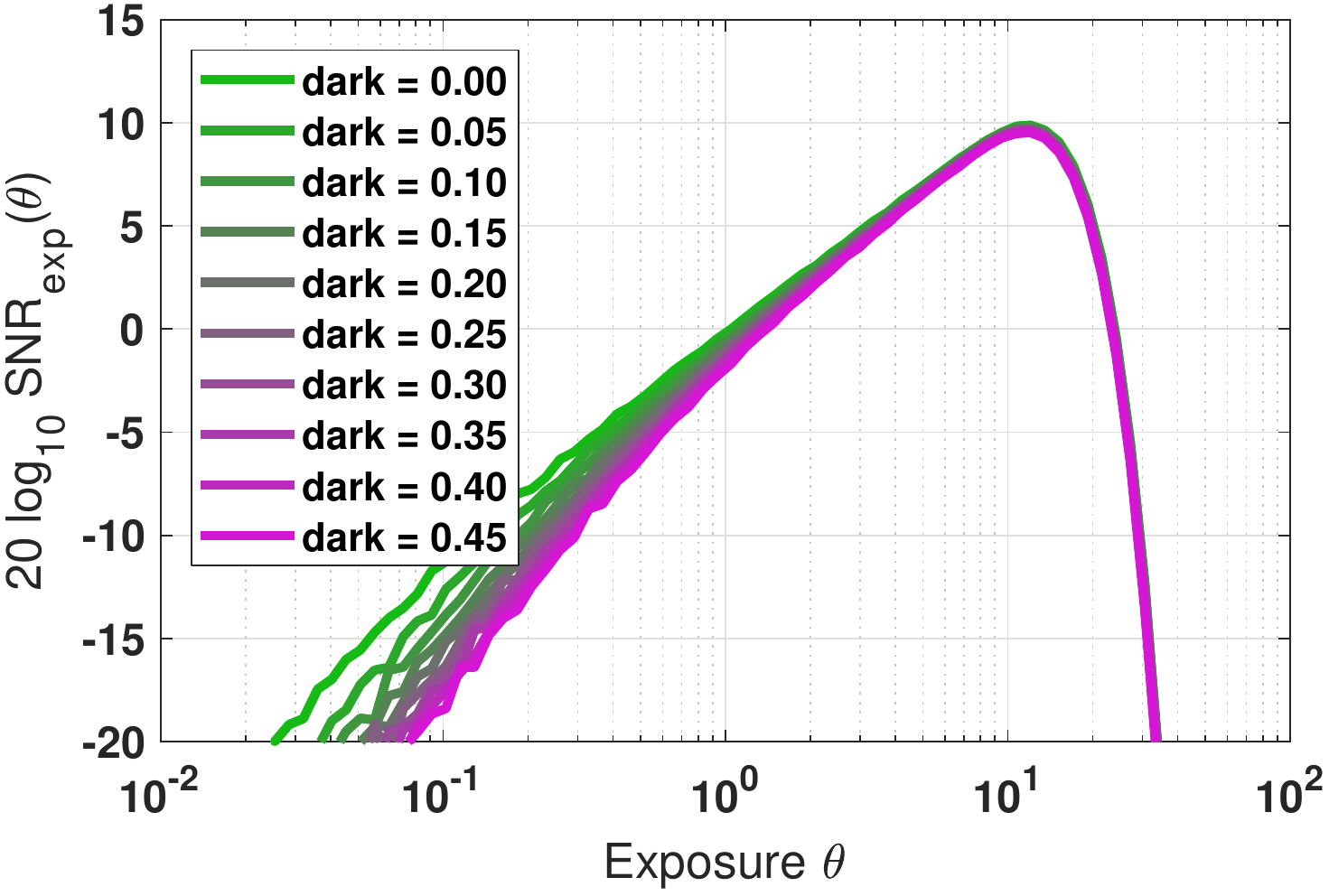}
\vspace{-4ex}
\caption{Exposure-referred SNR for a digital image sensor by considering different levels of dark current. The small fluctuation towards the tail on the left-hand side is due to randomness in the Monte Carlo simulation. }
\label{fig: SNR_dark}
\end{figure}
\end{example}

The utility of the Monte Carlo simulation is that it bypasses the complication of seeking for an analytic expression of $\text{SNR}_{\text{exp}}(\theta)$. To account for even more difficult modelings such as the pixel response non-uniformity, $1/f$ noise, conversion gain, and exposure time, etc, one just needs to modify the forward image formation model. For extreme cases such as very small $\theta$ where the random fluctuation is significant, one easy fix is to approximate $\text{SNR}_{\text{exp}}(\theta)$ by $\text{SNR}_{\text{out}}(\theta)$ using \eref{eq: SNR out dark read}. This approximation is reasonably accurate for small $\theta$ that is sufficiently far away from the saturation cutoff.

\section{Utilities}

After elaborating on the details of the exposure-referred SNR, readers may ask: what are the utilities of this SNR? The answer is simple. As a performance metric of an image sensor, the primary utility of the exposure-referred SNR is to describe how well an image sensor performs. Because of this primary goal, three points should be noted:
\vspace{2.5ex}
\begin{itemize}
\item The exposure-referred SNR is a generalized version of the output-referred SNR. The latter is a special case of the former when the bit-depth $L$ is large as shown in \fref{fig: figure08 exp vs out}. The output-referred SNR and the exposure-referred SNR are very similar for large $L$, whereas, for a small bit-depth, the output-referred SNR cannot capture the phenomenon when the exposure goes beyond the full-well capacity.
\item Because of the first point, any subsequent low bit-depth sensors applications need to use the exposure-referred SNR. Using the output-referred SNR will lead to sub-optimal performance, and this will be illustrated in Application 1 on high-dynamic range imaging.
\begin{figure}[th]
\centering
\includegraphics[width=\linewidth]{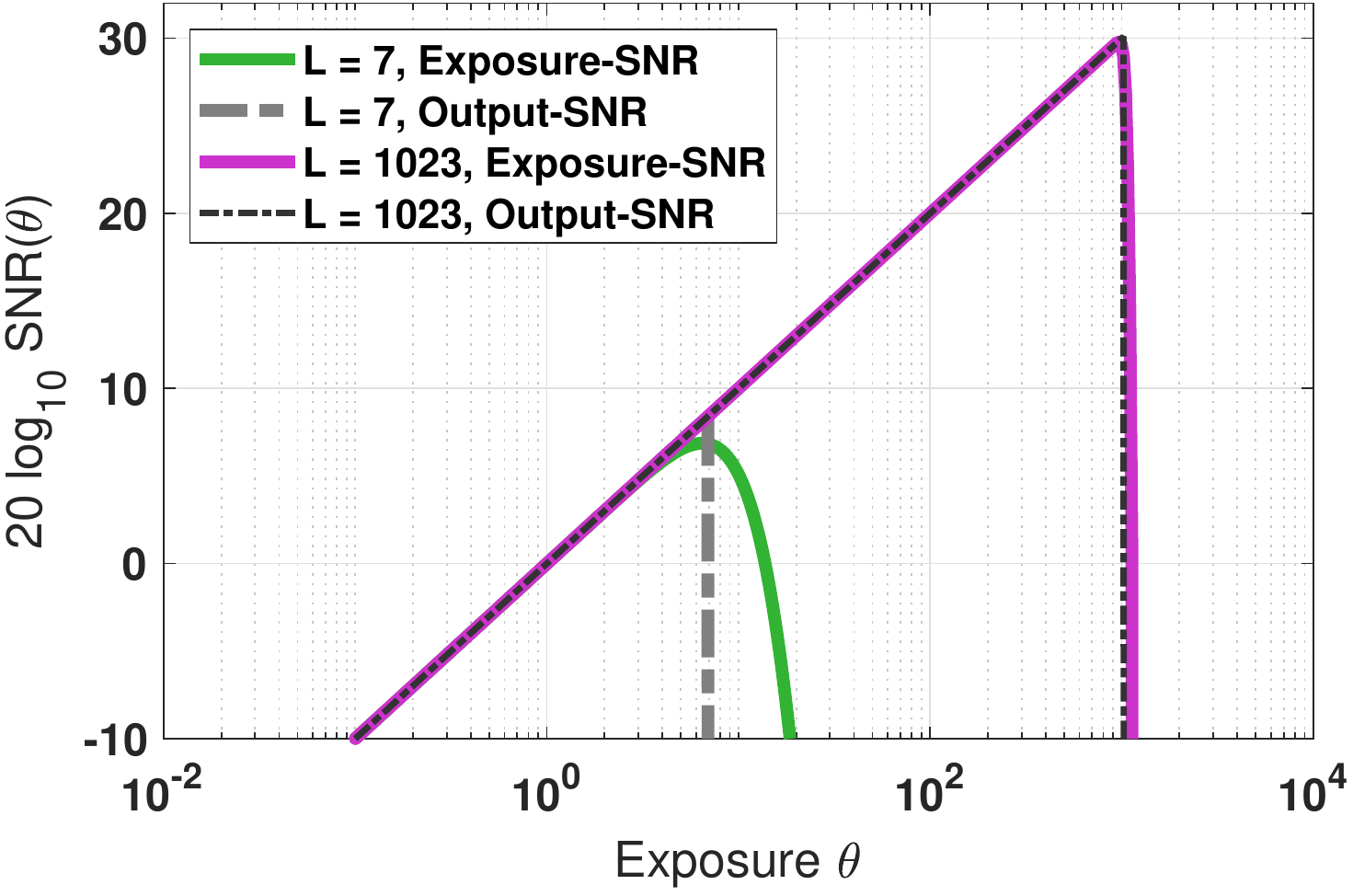}
\vspace{-4ex}
\caption{The output-referred SNR is a special case of the exposure-referred SNR when the full-well capacity $L$ is large. This figure plots the SNRs of a truncated Poisson random variable. For the output-referred SNR, the curve is generated by $\text{SNR}_{\text{out}}(\theta) = \sqrt{\theta}$ for $\theta \le L$ and $\text{SNR}_{\text{out}}(\theta) = 0$ for $\theta > L$.}
\label{fig: figure08 exp vs out}
\end{figure}
\item Again, because of the first point, the exposure-referred SNR can be used as the objective function to tune the parameters of the one-bit and few-bit image sensors. Since closed-form expressions of the SNR are available for some cases, the optimal parameters can also be expressed in closed-form. Application 2 on the threshold design will illustrate this point.
\end{itemize}

\subsection{High Dynamic Range Imaging Using a 3-bit Sensor}
The majority of the high dynamic range (HDR) image reconstruction algorithms today are designed for digital image sensors with a large full-well capacity, and so $\text{SNR}_{\text{out}}(\theta)$ is adequate. However, for image sensors with a small full-well capacity, e.g., $L = 7$, a fusion algorithm based on $\text{SNR}_{\text{exp}}(\theta)$ will produce a better image.

To illustrate the impact of $\text{SNR}_{\text{exp}}(\theta)$ in HDR imaging, consider the problem of reconstructing one HDR image (a single pixel) $\thetahat$ from $M$ exposure brackets $\mY^{1},\ldots,\mY^{M}$. Each of these exposure brackets $\mY^{m} = [Y_1^{m},\ldots,Y_N^{m}]$ has $N$ frames. The formation of each frame follows the equation
\begin{equation*}
Y^{m}_n \overset{\text{\tiny i.i.d.}}{\sim}
\begin{cases}
\text{Poisson}(\tau_m \theta), &\quad \tau_m \theta < L,\\
L,                             &\quad \tau_m \theta \ge L,
\end{cases}
\end{equation*}
for $n = 1,\ldots,N$, where $\tau_m$ is the $m$th integration time. To reconstruct the $m$th low-dynamic range (LDR) image, one can substitute the sample average $\overline{Y}^{m} = (1/N)\sum_{n=1}^N Y_n^m$ into the inverse mapping of the mean $\mu^{-1}$. Then the HDR image is fused by a linear combination of the estimates \footnote{The problem here does not assume any motion so that the theoretically optimal solution can be analytically derived. Handling motion remains an open problem in computer vision, although significant progress has been made over the past decade.}
\begin{equation}
\thetahat = \sum_{m=1}^M w_m \frac{\mu^{-1}(\overline{Y}^m)}{\tau_m},
\end{equation}
where $\{w_1,\ldots,w_M\}$ is a set of linear combination weights. The development of the idea can be traced back to Mann and Picard \cite{mann94b}, Debevec and Malik \cite{Debevec_Malik_HDR}, among other works \cite{nayar2000high, granados2010optimal, kirk2006noise}. For a conventional CIS with a large full-well capacity, $\mu^{-1}$ is an identity mapping and so the reconstruction is simplified to $\thetahat = \sum_{m=1}^M \frac{w_m}{\tau_m} \overline{Y}^m$. For sensors with a small full-well capacity, the nonlinearity of the mean function should be taken into account.
This reconstruction scheme is elementary and predates all the deep learning methods.

As previously proved by Gnanasambandam and Chan in \cite{Gnanasambandam_TCI_HDR}, the optimal weight $w_m$ is
\begin{equation}
w_m = \frac{\text{SNR}_m^2(\theta)}{\sum_{m=1}^M \text{SNR}_m^2(\theta)},
\end{equation}
where $\text{SNR}_m(\theta)$ is the SNR of the $m$th low dynamic range image. If $\text{SNR}_{\text{out}}(\theta)$ is used, the weight will become
\begin{align}
w_m
= \frac{\text{SNR}_{\text{out},m}^2(\theta)}{\sum_{m=1}^M \text{SNR}_{\text{out},m}^2(\theta)}
&= \frac{(\sqrt{\tau_m \theta \cdot \mathbb{I}_{\{\tau_m\theta<L\}}})^2}{\sum_{m=1}^M (\sqrt{\tau_m \theta \cdot \mathbb{I}_{\{\tau_m\theta<L\}}})^2} \notag\\
&= \frac{\tau_m \cdot \mathbb{I}_{\{\tau_m\theta<L\}}}{\sum_{m=1}^M \tau_m \cdot \mathbb{I}_{\{\tau_m\theta<L\}}}, \label{eq: optimal weight output}
\end{align}
where the function $\mathbb{I}_{\{\tau_m\theta<L\}}$ is a binary indicator showing whether the total exposure $\tau_m\theta$ has exceeded the full-well capacity. $\mathbb{I}_{\{\tau_m\theta<L\}} = 1$ if the argument is true and $\mathbb{I}_{\{\tau_m\theta<L\}} = 0$ if the argument is false. If $\text{SNR}_{\text{exp},m}(\theta)$ is used, the optimal weight will become
\begin{equation}
w_m = \frac{\text{SNR}_{\text{exp},m}^2(\theta)}{\sum_{m=1}^M \text{SNR}_{\text{exp},m}^2(\theta)},
\end{equation}
where one can substitute \eref{eq: SNR exp main} into this equation.

Once the combined image is formed, the overall SNR can be computed via
\begin{equation}
\text{SNR}_{\text{HDR}}(\theta) = \frac{\theta}{\sqrt{\sum_{m=1}^M \left(\frac{w_m}{\tau_m}\right)^2 \sigma_{m}^2(\theta)}}.
\label{eq: SNR HDR}
\end{equation}
where $\sigma_{m}^2(\theta) = \Var[\mu^{-1}(\overline{Y}^m)]$ is the noise variance of the $m$th reconstructed LDR image. The intuitive interpretation of \eref{eq: SNR HDR} is that it weighs the noise according to the integration time $\tau_m$ and combination weight $w_m$ to produce a calibrated noise. Thus, the ratio is the SNR with respect to the optimally combined image $\thetahat$.

The question to be answered here is: If the weight $w_m$ is computed by using the output-referred SNR while the actual sensor has a small full-well capacity such as $L = 7$, what will $\text{SNR}_{\text{HDR}}(\theta)$ be? To answer this question, consider the following configurations: Assume four integration times $\tau_1 = 1$, $\tau_2 = 0.1$, $\tau_3 = 0.01$, $\tau_4 = 0.001$, a total number of frames $N = 100$ at each integration time, and a full-well capacity of $L = 7$. The output-referred SNR for each $m$, $\text{SNR}_{\text{out},m}$, is computed by \eref{eq: optimal weight output}, whereas the exposure-referred SNR for each $m$, $\text{SNR}_{\text{exp},m}$, is computed by \eref{eq: SNR exp main}. Once computed, the weight $w_m$ is constructed from $\text{SNR}_{\text{out},m}$. The overall $\text{SNR}_{\text{HDR}}(\theta)$ is formed by using \eref{eq: SNR HDR} where the calibrated noise $\sigma_{m}^2(\theta)$ uses the exposure-referred noise $\sigma_{\text{exp},m}^2(\theta)$, essentially $\Var[Y] \cdot \frac{d\mu}{d\theta}$ defined in \eref{eq: SNR exp main}.

\begin{figure}[th]
\centering
\includegraphics[width=\linewidth]{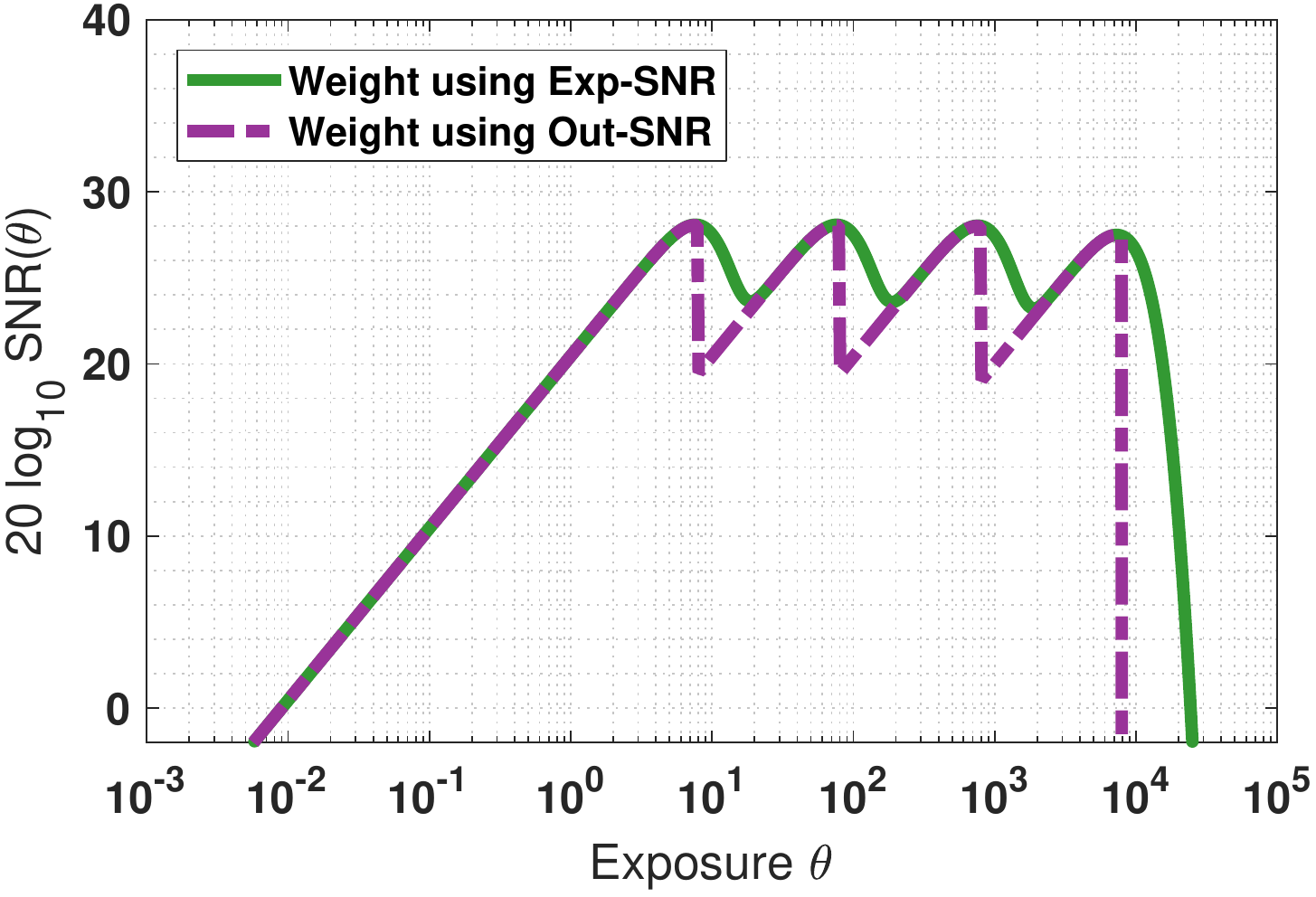}
\vspace{-4ex}
\caption{Comparison of the exposure bracketing by using two different schemes to compute the weight when the sensor has a full-well capacity of $L = 7$. (i) Use the exposure-referred SNR to compute the weight. (ii) Use the output-referred SNR to compute the weight. }
\label{fig: figure09 bracketing}
\end{figure}

Since the weight $w_m$ is computed using $\text{SNR}_{\text{out}}$ while the actual sensor has a small full-well capacity, the overall $\text{SNR}_{\text{HDR}}$ will suffer. \fref{fig: figure09 bracketing} shows the $\text{SNR}_{\text{HDR}}$'s where the weights are either computed from $\text{SNR}_{\text{out}}$ or $\text{SNR}_{\text{exp}}$. This is a new plot that has never been shown including \cite{Gnanasambandam_TCI_HDR}. As one can see, $\text{SNR}_{\text{HDR}}$ suffers in two places. The first place is the gap between two consecutive maxima, where the weights generated by the $\text{SNR}_{\text{out}}$ gives an overall $\text{SNR}_{\text{HDR}}$ with a sharp cutoff. This discontinuity will be visible when the scene contains a continuous range of $\theta$. The second place is the sharp cutoff after the full-well capacity. The visual impact is that the overall HDR image will saturate sooner than what it is supposed to be, compared to the fusion using exposure-referred SNR.

\fref{fig: HDR example} provides a visual comparison between the HDR image reconstructed using $\text{SNR}_{\text{out}}$ and $\text{SNR}_{\text{exp}}$. In this example, the full-well capacity is assumed to be $L = 7$. Four different exposures (1 ms, 0.1 ms, 0.01 ms, and 0.001 ms) were used to construct the low dynamic range images, where each image is the result of $N = 100$ frames averaged over time. The scene is static, and so the reconstructed results are theoretically optimal with respect to the linear combination and the choice of the SNR. The visual comparison shows a clear benefit of $\text{SNR}_{\text{exp}}$ over $\text{SNR}_{\text{out}}$ especially around the cropped areas where the pixels are near saturation. However, the performance gap will become smaller when the full-well capacity becomes larger.

\begin{figure}[ht]
\centering
\begin{tabular}{cc}
\hspace{-1.0ex}\includegraphics[width=0.47\linewidth]{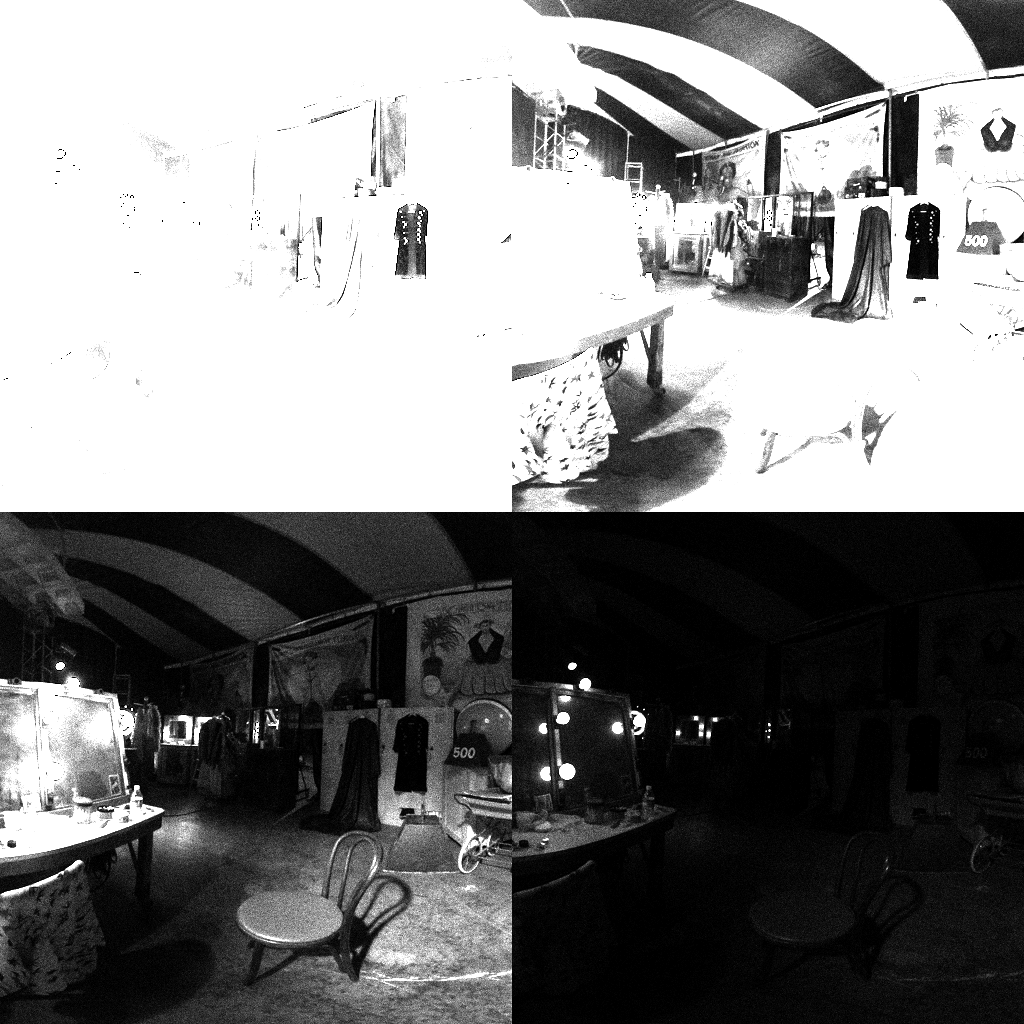} &
\hspace{-2.0ex}\includegraphics[width=0.47\linewidth]{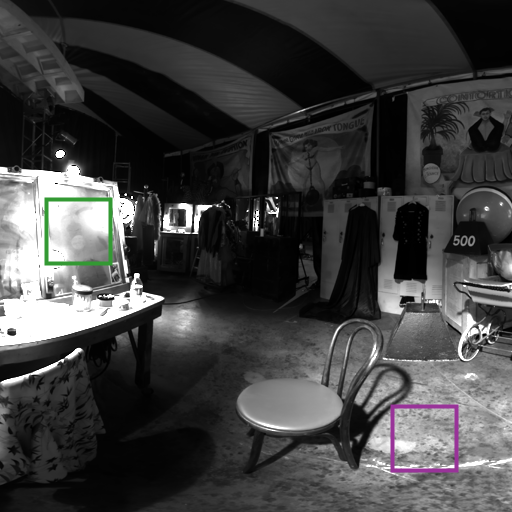} \\
(a) Exposures & (b) Ground Truth \\
\hspace{-1.0ex}\includegraphics[width=0.47\linewidth]{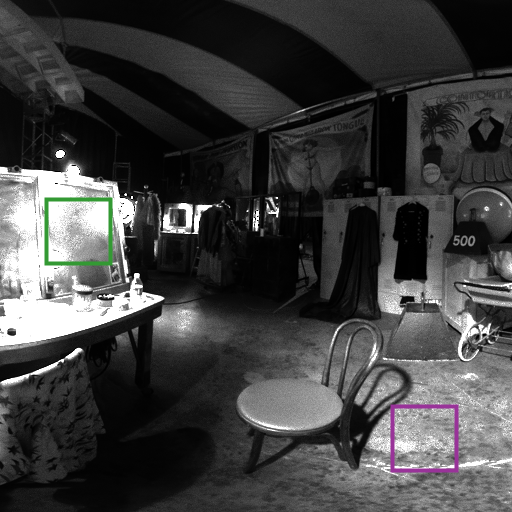}&
\hspace{-2.0ex}\includegraphics[width=0.47\linewidth]{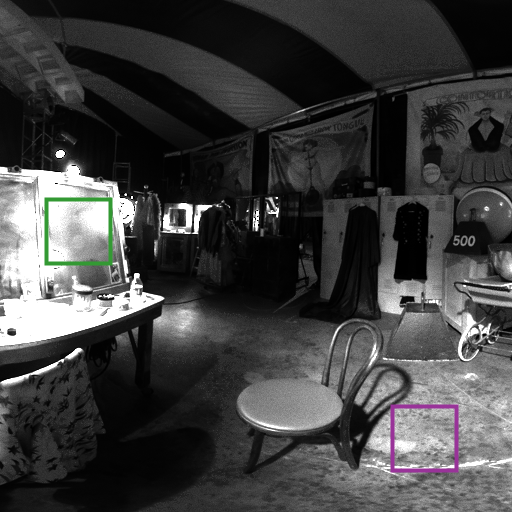} \\
(c) Using $\text{SNR}_{\text{out}}$ & (d) Using $\text{SNR}_{\text{exp}}$\\
PSNR = 37.53dB & PSNR = 45.02dB\\
\end{tabular}
\begin{tabular}{ccc}
\hspace{-1.0ex}\includegraphics[width=0.32\linewidth]{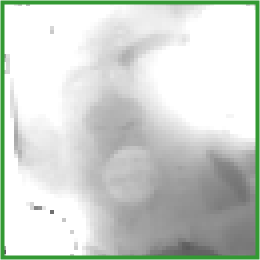} &
\hspace{-2.0ex}\includegraphics[width=0.32\linewidth]{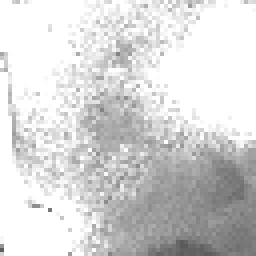} &
\hspace{-2.0ex}\includegraphics[width=0.32\linewidth]{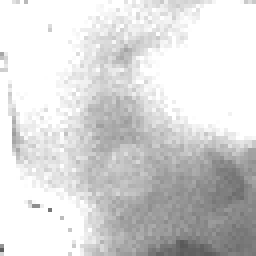} \\
\hspace{-1.0ex}\includegraphics[width=0.32\linewidth]{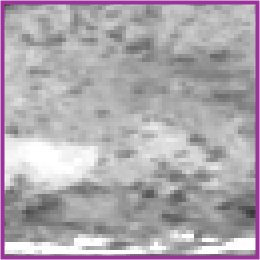} &
\hspace{-2.0ex}\includegraphics[width=0.32\linewidth]{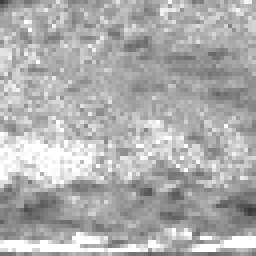} &
\hspace{-2.0ex}\includegraphics[width=0.32\linewidth]{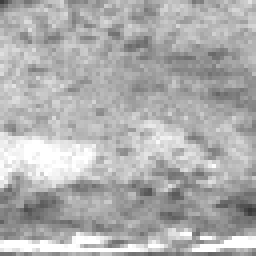} \\
(e) Ground Truth & (f) Using $\text{SNR}_{\text{out}}$ & (g) Using $\text{SNR}_{\text{exp}}$
\end{tabular}
\caption{High dynamic range (HDR) image reconstruction using the optimal linear combination scheme. (a) The four exposures are captured at 1 ms, 0.1 ms, 0.01 ms, and 0.001 ms. Shown in this sub-figure are the single-frame measurements with a full-well capacity of 3 bits. (b) The ground truth HDR image. (c) Linear combination using the output-referred SNR. (d) Linear combination using the exposure-referred SNR. For (c)-(d), the reconstruction is based on averaging 100 frames per exposure. (e)-(f)-(g) shows the zoom-in regions of the respective figures.}
\label{fig: HDR example}
\end{figure}

\subsection{Threshold for one-bit Quanta Image Sensors}
The second application is the theoretical analysis of the one-bit quanta images sensor (QIS). The one-bit QIS has an interesting forward model given by \eref{eq: QIS model} where the corresponding $\text{SNR}_{\text{exp}}(\theta)$ is derived in \eref{eq: SNR exp QIS} (assuming that the read noise $\sigma_{\text{read}}$ is negligible compared to the signal $\theta$.) This section will show two new results on the optimal threshold $q$ that were not mentioned in \cite{elgendy2018optimal}.

\subsubsection{Optimal $q$ for $\theta \gg 1$}  The context of this operating regime is that the sensor sees a sufficient amount of photons but it chooses to operate in a single-bit mode to earn the speed. In this case, the threshold $q$ should be dynamically adjusted to maximize the SNR.

When $\theta \gg 1$, the read noise $\sigma_{\text{\scriptsize read}}$ can be neglected. Consequently, the SNR follows the Poisson statistics as in \eref{eq: SNR exp QIS}. By noting that $0 \le \Psi_q(\theta) \le 1$ and hence $\Psi_q(\theta)(1-\Psi_q(\theta)) \le 1/4$ where the maximum is attained when $\Psi_q(\theta) = 1/2$, $\text{SNR}_{\text{exp}}(\theta)$ can be lower bounded by
\begin{align}
\text{SNR}_{\text{exp}}(\theta)
&= \frac{\theta}{\sqrt{\Psi_q(\theta)(1-\Psi_q(\theta))}} \cdot \frac{\theta^{q-1}e^{-\theta}}{(q-1)!} \notag \\
&\ge \frac{\theta}{\frac{1}{2}} \cdot \frac{\theta^{q-1}e^{-\theta}}{(q-1)!}. \label{eq: QIS SNR one-bit bound}
\end{align}
The following lemma shows the optimal $q$ for the lower bound.

\begin{lemma}[Maximizing the lower bound]
Consider the SNR lower bound for one-bit QIS given by \eref{eq: QIS SNR one-bit bound}. The bound is maximized when $q^* = \lfloor \theta \rfloor + 1$.
\end{lemma}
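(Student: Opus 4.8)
The plan is to regard the right-hand side of \eref{eq: QIS SNR one-bit bound} as a sequence in the discrete threshold $q \in \N$ and to locate its maximizer by a ratio test on consecutive terms rather than by differentiation, since the factorial $(q-1)!$ makes a continuous derivative awkward. Writing the bound as
\begin{equation*}
g(q) \bydef \frac{\theta}{\tfrac{1}{2}} \cdot \frac{\theta^{q-1}e^{-\theta}}{(q-1)!} = \frac{2\,\theta^{q}e^{-\theta}}{(q-1)!},
\end{equation*}
I treat $\theta$ as fixed and ask for which integer $q \ge 1$ the quantity $g(q)$ is largest. The key observation is that the factorial and the power of $\theta$ both telescope when one forms the ratio of successive terms.

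First I would compute
\begin{equation*}
\frac{g(q+1)}{g(q)} = \frac{\theta^{q+1}/q!}{\theta^{q}/(q-1)!} = \frac{\theta}{q}.
\end{equation*}
This single identity drives the whole argument: $g(q+1) \ge g(q)$ holds exactly when $q \le \theta$, and $g(q+1) < g(q)$ exactly when $q > \theta$. Consequently, for every $q \le \lfloor\theta\rfloor$ we have $\theta/q \ge 1$, so the sequence is non-decreasing and climbs up to index $\lfloor\theta\rfloor + 1$; and for every $q \ge \lfloor\theta\rfloor + 1$ we have $q > \theta$ (because $\lfloor\theta\rfloor + 1 > \theta$ in all cases), so $\theta/q < 1$ and the sequence is strictly decreasing from that index onward.

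Combining the two monotone pieces shows that $g$ is unimodal with its peak at $q^* = \lfloor\theta\rfloor + 1$, which is the claim. I expect no genuine obstacle in this argument; the only subtlety worth flagging is the boundary case in which $\theta$ is itself an integer. There $\theta/q = 1$ at $q = \theta$, so $g(\theta) = g(\theta+1)$ and the maximizer is not unique, but $q^* = \lfloor\theta\rfloor + 1 = \theta + 1$ remains a valid maximizer and the stated formula still holds. I would also check the small-$\theta$ edge case (e.g.\ $\theta < 1$, where $\lfloor\theta\rfloor + 1 = 1$ saturates the constraint $q \ge 1$) to confirm the formula degrades gracefully, and close by noting that since the bound \eref{eq: QIS SNR one-bit bound} is tight precisely when $\Psi_q(\theta) = 1/2$, this choice of $q^*$ pushes the one-bit response toward its most informative operating regime.
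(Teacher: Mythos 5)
Your proof is correct, and it is worth noting that the paper itself does not actually prove this lemma --- it defers entirely to the citation \cite{elgendy2018optimal}, so your argument supplies a self-contained derivation of something the paper only outsources. Your route is the natural one: the lower bound is, up to the constant $2\theta e^{-\theta}$... more precisely, $g(q) = 2\theta \cdot \frac{\theta^{q-1}e^{-\theta}}{(q-1)!}$ is $2\theta$ times the Poisson$(\theta)$ probability mass at $q-1$, so maximizing over $q$ is exactly locating the Poisson mode, and the consecutive-ratio test $g(q+1)/g(q) = \theta/q$ is the standard way to do it. Your handling of the two monotone pieces is sound: the ratio is at least one for $q \le \lfloor\theta\rfloor$ and strictly below one for $q \ge \lfloor\theta\rfloor + 1$ (since $\lfloor\theta\rfloor + 1 > \theta$ always), which gives unimodality with peak at $q^* = \lfloor\theta\rfloor + 1$; the integer-$\theta$ tie ($g(\theta) = g(\theta+1)$) and the $\theta < 1$ edge case are flagged correctly and do not damage the claim. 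One caution on your closing remark: the tightness of the bound \eref{eq: QIS SNR one-bit bound} at $q^*$ is not a consequence of your ratio argument --- it requires the separate Gaussian/Stirling approximation $\Psi_{q^*}(\theta) \approx 1/2$ valid only for $\theta \gg 1$, which is precisely the follow-up point the paper develops after the lemma --- so it should be stated as a regime-dependent observation rather than an unconditional property of $q^*$.
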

\begin{proof}
The proof of this lemma can be found in \cite{elgendy2018optimal}.
\end{proof}

What was not proved in \cite{elgendy2018optimal} is that at this optimal $q$, the inequality in \eref{eq: QIS SNR one-bit bound} is actually an equality. The reason is that for large $\theta$, $\Psi_q(\theta)$ can be approximated by the cumulative distribution function of a Gaussian via the Stirling's formula. This can be seen from \eref{eq: Psi approximate} where for large $\theta$,
\begin{equation*}
\Psi_q(\theta) \approx \int_{-\infty}^{q}\frac{1}{\sqrt{2\pi\theta}} \exp\left\{-\frac{(\theta+1-y)^2}{2\theta}\right\} dy.
\end{equation*}
The integral is $\Psi_q(\theta) = 1/2$ when $q = \theta+1$, because the integrand is a Gaussian probability distribution centered at $\theta+1$. In other words, at the optimal $q$, $\Psi_q(\theta) = 1/2$ and so $\Psi_q(\theta)(1-\Psi_q(\theta)) = 1/4$. Hence, the equality in \eref{eq: QIS SNR one-bit bound} is satisfied, meaning that $q = \theta+1$ does not only maximize the lower bound but it also maximizes the SNR.

\subsubsection{Optimal $q$ for $\theta \approx 1$}. This is a new result. The context is that the photon flux is small and so the goal of the sensor is to count the number of photons. However, for small $\theta$, the read noise plays a role because if $\sigma_{\text{read}}$ is big, two adjacent counts cannot be differentiated. The threshold $q$ in this context is used to quantize the analog voltage (which is a Poisson-Gaussian random variable) so that the SNR is maximized.

For one-bit Poisson-Gaussian, the probability distribution of the binary random variable $Y$ still follows \eref{eq: QIS model} but with $X \sim \text{Poisson}(\theta) + \text{Gaussian}(0,\sigma_{\text{\scriptsize read}}^2)$. The exposure-referred SNR is computable because $Y$ is binary. To see this, notice that the mean of $Y$ is
\begin{align}
\underset{\mu(\theta)}{\underbrace{\E[Y]}}
&= \sum_{k=0}^{\infty} \frac{\theta^k e^{-\theta}}{k!} \int_{-\infty}^{q} \frac{1}{\sqrt{2\pi\sigma_{\text{\scriptsize read}}^2}} \exp\left\{-\frac{(y-k)^2}{2\sigma_{\text{\scriptsize read}}^2}\right\} dy \notag \\
&= \frac{1}{2}\sum_{k=0}^{\infty} \frac{\theta^k e^{-\theta}}{k!} \text{erfc}\left(\frac{q-k}{\sqrt{2}\sigma_{\text{\scriptsize read}}}\right), \label{eq: mu, one-bit QIS 1}
\end{align}
where $\text{erfc}(\cdot)$ is the error function. In the first equation the term inside the summation is the convolution of a Poisson probability density and a Gaussian probability density, evaluated at the threshold $q$. \eref{eq: mu, one-bit QIS 1} can be computed using numerical techniques.

The variance of $Y$ follows from the fact that $Y$ is binary (i.e. a Bernoulli random variable). Thus,
\begin{align}
\Var[Y] = \E[Y](1-\E[Y]) = \mu(\theta)(1-\mu(\theta)). \label{eq: mu, one-bit QIS 2}
\end{align}
The derivative $d\mu/d\theta$ can be approximated numerically:
\begin{align}
\frac{d\mu}{d\theta} \approx \frac{\mu(\theta+\epsilon) - \mu(\theta)}{\epsilon}, \label{eq: mu, one-bit QIS 3}
\end{align}
where $\epsilon$ is a small numerical constant. Combining everything in \eref{eq: mu, one-bit QIS 1}, \eref{eq: mu, one-bit QIS 2} and \eref{eq: mu, one-bit QIS 3}, the exposure-referred SNR can be numerically computed via Theorem~\ref{thm: SNR exp}.

\begin{figure}[ht]
\centering
\includegraphics[width=\linewidth]{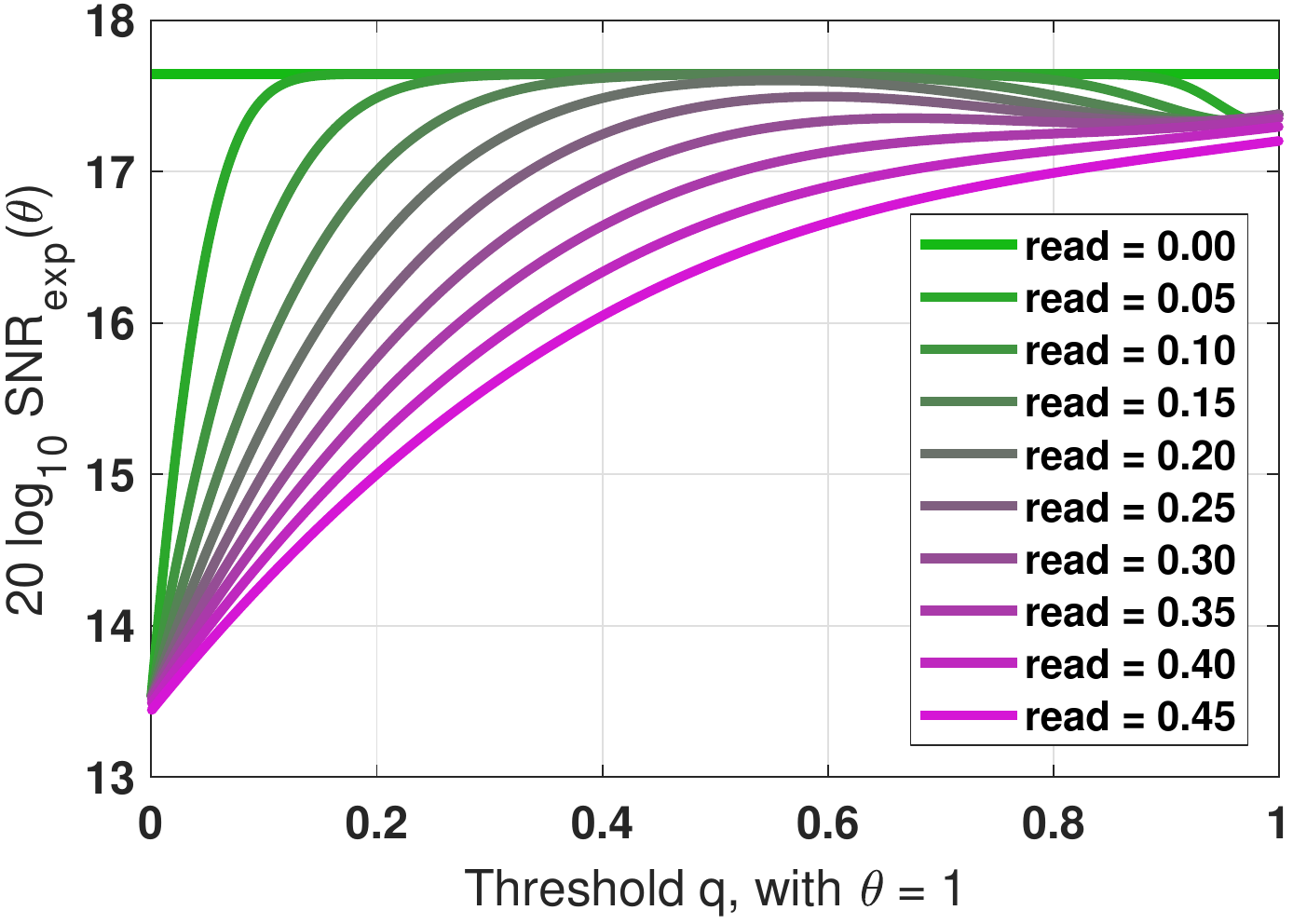}
\vspace{-4ex}
\caption{$\text{SNR}_{\text{exp}}$ as a function of the threshold when the exposure is $\theta = 1$. As the read noise $\sigma_{\text{read}}$, increases, the SNR drops when $q = 0.5$.}
\label{fig: threshold read noise}
\end{figure}

The derived $\text{SNR}_{\text{exp}}(\theta)$ is plotted in \fref{fig: threshold read noise} as a function of the threshold $q$, for various read noise levels $\sigma_{\text{read}}$. When $\sigma_{\text{read}} = 0$, the SNR is a constant for all $q$ considered. This is because in the absence of read noise, there is no ambiguity in determining the measured number of photons regardless where the threshold $q$ is put. As $\sigma_{\text{read}}$ increases, two adjacent counts begin to overlap due to the Gaussian. The maximum is located around $q = 0.5$. For high read noise where the two peaks of the photon count merge, there is limited SNR one can expect from the sensor.

\section{Discussions and Conclusion}
\subsection{Alternatives to SNR?}
While SNR is a natural choice for analyzing the performance of an image sensor, it is by no means the \emph{only} option. Especially for one-bit devices such as the quanta image sensor, there are other ways to characterize the performance.

\subsubsection{Entropy}
As far as one-bit measurements are concerned, the entropy is a natural substitute of the SNR. If $Y$ is binary with $p_Y(1) = 1-\Psi_q(\theta)$ and $p_Y(0) = \Psi_q(\theta)$, the entropy is
\begin{align*}
H(Y)
&= - p_Y(1) \log_2 p_Y(1) - p_Y(0)\log_2 p_Y(0) \notag \\
&= - (1-\Psi_q(\theta)) \log_2 (1-\Psi_q(\theta)) - \Psi_q(\theta)\log_2 \Psi_q(\theta).
\end{align*}
It is relatively easy to show that the derivative of the entropy with respect to $\Psi_q(\theta)$ is
\begin{align*}
\frac{d}{d\Psi_q(\theta)} H(Y) = -\log\left(\frac{1-\Psi_q(\theta)}{\Psi_q(\theta)}\right).
\end{align*}
Setting it to zero will yield $\Psi_q(\theta) = \frac{1}{2}$. Therefore, the entropy is maximized when $\E[Y] = 1-\Psi_q(\theta) = \frac{1}{2}$. Since $\E[Y]$ is the expected value of the measurement, $\E[Y] = \frac{1}{2}$ means that the entropy is maximized when there are 50\% one's and 50\% zero's in a set of independent measurements. So, if the application goal is to identify a threshold $q$ such that the performance of the sensor is maximized, then instead of optimizing for the SNR as in Section VI.B.1, the alternative is to optimize the entropy. The solution to $\Psi_q(\theta) = 1/2$ is $q^* = \theta + 1$, which is consistent with Section VI.B.1.

\subsubsection{Bit Error Rate (BER)}
In the presence of read noise, the bit error rate is another commonly used criterion to evaluate the performance of a sensor. For one-bit quanta image sensor, the BER measures the probability of making a wrong decision (i.e., declaring a 0 as a 1, or declaring a 1 as a 0). It can be readily computed as
\begin{align}
\text{BER}(\theta)
&= p_Y(0) \cdot \int_{q}^{\infty} \frac{1}{\sqrt{2\pi\sigma_{\text{read}}^2}}e^{-\frac{t^2}{2\sigma_{\text{read}}^2}}  dt \notag \\
&\quad + p_Y(1) \cdot \int_{-\infty}^{q} \frac{1}{\sqrt{2\pi\sigma_{\text{read}}^2}}e^{-\frac{(t-1)^2}{2\sigma_{\text{read}}^2}}  dt \notag\\
&=  \frac{1}{2}\text{erfc}\left(\frac{q}{\sigma_{\text{read}}\sqrt{2}}\right) \Psi_q(\theta) \notag \\
&\quad + \frac{1}{2}\text{erfc}\left(\frac{1-q}{\sigma_{\text{read}}\sqrt{2}}\right)(1-\Psi_q(\theta)). \label{eq: BER 2}
\end{align}
Therefore, if $q = 1/2$, the BER is simplified to
\begin{equation}
\text{BER}(\theta) = \frac{1}{2}\text{erfc}\left(\frac{1}{\sigma_{\text{read}} \sqrt{8}}\right),
\label{eq: BER}
\end{equation}
which does not depend on $\theta$. If $\text{BER}(\theta)$ can be empirically measured, then by inverting \eref{eq: BER} one can estimate the read noise $\sigma_{\text{read}}$. For a fixed $\theta$, one can also optimize \eref{eq: BER 2} by finding an appropriate $q$.

\subsection{Conclusion}
The exposure-referred SNR is a concept motivated by the need to capture the sensor's behavior near and beyond the full-well capacity. For small image sensors with one or few bits, exposure-referred SNR provides a natural characterization of the performance without showing an infinite SNR due to the artificial squeezing of the noise. In order to establish the exposure-referred SNR, the paper introduces new mathematical concepts and showed a few results:
\begin{itemize}
\item The mean invariance property is introduced. The property asserts that when an estimator $\thetahat(\cdot)$ is applied to the mean of the measurement, the mapped value is the true parameter $\thetahat(\mu) = \theta$.
\item Exposure-referred SNR calibrates the noise variance $\Var[Y]$ by the derivative $d\mu/d\theta$, where the derivative is the result of the first-order approximation used in the Delta Method.
\item SNR of a sensor with a finite full-well capacity is analytically derived via the incomplete Gamma function. The new result generalizes the conventional ones. For sensors with a large full-well capacity, the new result recovers the classical one that shows a linear response. For sensors with a small full-well capacity, the new result shows how the transient of the SNR looks like.
\item Monte Carlo simulation is presented for complex noise models. The general procedure is to estimate the mean of the measurement $\mu(\theta)$ and the variance of the measurement $\Var[Y]$. Then by using a numerical finite difference operator, the derivative $d\mu/d\theta$ can be approximated.
\item Optimal high dynamic range image reconstruction is shown. The new result generalizes the classical linear response.
\item Threshold analysis of one-bit sensors is revisited and generalized. For $\theta \gg 1$, the optimal threshold is $q = \theta + 1$. For $\theta \approx 1$, the optimal threshold needs to overcome the read noise and so the optimal value is $q = 1/2$.
\item For one-bit sensors, it was found that the entropy of the bits and the bit error rate can be used as alternatives to characterize the sensor.
\end{itemize}

As the full-well capacity of the image sensors is becoming small, it is anticipated that the exposure-referred SNR will become a useful utility for theoretical analysis of the sensors, hence allowing signal processing \cite{Gnanasambandam_Megapixel_2019,Elgendy_CFA_2021,Elgendy_demosaicking_2021,Chan_Density_2022}, algorithm development \cite{Chan_GlobalSIP_2014,Chan_PnP_2016,Chan_MDPI_2016,Choi_ICASSP_2018,chi_chan_ECCV2020}, and computer vision applications \cite{gnanasambandam_chan_ECCV2020, Chengxi}. Specific problems such as sensor gain control, exposure analysis, bit-depth to speed trade off, and color filter array, will be important questions to answer next.

\section*{Acknowledgement}
The work is supported, in part, by the United States National Science Foundation under the grants CCF-1718007, CCSS-2030570, IIS-2133032. The authors thank Professor Eric Fossum for discussing the ideas of this paper.

\section{Appendix}
\subsection{Gaussian approximation of Poisson}
When deriving the first-order derivative of the incomplete Gamma function, it was mentioned that the Poisson distribution can be approximated by a Gaussian. The formal statement is as follows.
\begin{lemma}[Gaussian approximation of Poisson]
For large $\theta$ (i.e., $\theta \gg 1$), it holds that
\begin{equation}
p_X(x) \bydef \frac{\theta^x e^{-\theta}}{x!} \approx \frac{1}{\sqrt{2\pi\theta}}e^{-\frac{(x-\theta)^2}{2\theta}}.
\end{equation}
\end{lemma}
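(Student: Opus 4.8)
The plan is to invoke Stirling's approximation for the factorial and then perform a local expansion of $p_X$ about its mode at $x = \theta$. First I would write Stirling's formula as $x! \approx \sqrt{2\pi x}\,x^x e^{-x}$, valid for large $x$, and substitute it into the definition of $p_X(x)$. This gives
\begin{equation*}
p_X(x) \approx \frac{1}{\sqrt{2\pi x}}\left(\frac{\theta}{x}\right)^x e^{x-\theta}.
\end{equation*}
Because the Poisson mass concentrates within $O(\sqrt{\theta})$ of the mean, I would make the relevant regime explicit by writing $x = \theta + \delta$ with $\delta = O(\sqrt{\theta})$, so that $\delta/\theta = O(\theta^{-1/2})$ is small.

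Next I would take logarithms and Taylor-expand. Taking the log of the right-hand side and using $x - \theta = \delta$ yields
\begin{equation*}
\log p_X(x) \approx -\tfrac{1}{2}\log(2\pi x) - (\theta+\delta)\log\!\left(1+\frac{\delta}{\theta}\right) + \delta.
\end{equation*}
Expanding $\log(1+\delta/\theta) = \delta/\theta - \delta^2/(2\theta^2) + O(\delta^3/\theta^3)$ and multiplying through by $-(\theta+\delta)$, the key observation is that the term linear in $\delta$ produced by the expansion cancels exactly against the $+\delta$ contributed by the factor $e^{x-\theta}$, leaving
\begin{equation*}
-(\theta+\delta)\log\!\left(1+\frac{\delta}{\theta}\right) + \delta = -\frac{\delta^2}{2\theta} + O\!\left(\frac{\delta^3}{\theta^2}\right).
\end{equation*}

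Finally I would replace the slowly varying prefactor $\log(2\pi x)$ by $\log(2\pi\theta)$, which costs only a relative error of order $\delta/\theta = O(\theta^{-1/2})$, and exponentiate to recover
\begin{equation*}
p_X(x) \approx \frac{1}{\sqrt{2\pi\theta}}\exp\!\left\{-\frac{(x-\theta)^2}{2\theta}\right\},
\end{equation*}
where $\delta = x-\theta$. I expect the main obstacle to be the bookkeeping of the error terms rather than any single computation: one must argue that on the Gaussian scale $|x-\theta| = O(\sqrt{\theta})$ the cubic remainder $\delta^3/\theta^2 = O(\theta^{-1/2})$ and the prefactor substitution both vanish as $\theta \to \infty$, while for $|x-\theta| \gg \sqrt{\theta}$ both sides are already negligible. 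Controlling these remainders uniformly is what upgrades the heuristic ``$\approx$'' to a genuine asymptotic statement; for the purpose of this lemma, the leading-order cancellation of the linear term, which isolates the quadratic $-\delta^2/(2\theta)$, is what carries the argument.
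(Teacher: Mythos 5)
Your proposal is correct and follows essentially the same route as the paper's proof: Stirling's formula for $x!$, the substitution $x = \theta + \delta$, expansion of $\log(1+\delta/\theta)$, cancellation of the linear term, and retention of the quadratic term $-\delta^2/(2\theta)$. The only cosmetic difference is that the paper absorbs the prefactor correction $\tfrac{1}{2}\log(1+\epsilon/\theta)$ into the coefficient $(\theta+\epsilon+\tfrac{1}{2})$ of the log expansion, whereas you replace $\log(2\pi x)$ by $\log(2\pi\theta)$ in a separate step with an explicit error estimate.
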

Note that this is \emph{not} the Central Limit theorem because it does not involve any sample average. The approximation compares the two functions.

\begin{proof}
First of all, take the log on the Poisson equation:
\begin{equation*}
\log p_X(x) = \log \left\{\frac{\theta^x e^{-\theta}}{x!}\right\} = x \log \theta - \theta - \log x!
\end{equation*}
Stirling's formula states that for $x \rightarrow \infty$, we have $x! \approx x^x e^{-x} \sqrt{2\pi x}$. Substitute into the previous equation yields
\begin{align*}
\log p_X(x)
&\approx x \log \theta - \theta - \log \left(x^x e^{-x} \sqrt{2\pi x}\right)\\
&= x \log \theta - \theta - x\log x + x - \log \sqrt{2\pi x}.
\end{align*}
The Gaussian has to fit the Poisson well around the mean, which is $\theta$. Thus define $x = \theta +\epsilon$ with $\theta \gg \epsilon$. Then,
\begin{align*}
\log p_X(x)
&= {x} \log \theta - \theta - {x}\log {x} + {x} - \log \sqrt{2\pi {x}} \\
&\hspace{-7ex}= {(\theta+\epsilon)} \log \theta - \theta - {(\theta+\epsilon)}\log {(\theta+\epsilon)} \\
&\qquad + {(\theta+\epsilon)} - \log \sqrt{2\pi {(\theta+\epsilon)}} \\
&\hspace{-7ex}= \epsilon + (\theta+\epsilon)\log\frac{\theta}{\theta + \epsilon} - \log\sqrt{2\pi(\theta+\epsilon)}\\
&\hspace{-7ex}= \epsilon - (\theta+\epsilon)\log\left(1+\frac{\epsilon}{\theta}\right) - \log\sqrt{2\pi\theta}-\frac12\log\left(1+\frac{\epsilon}{\theta}\right)\\
&\hspace{-7ex}= \epsilon - \log\sqrt{2\pi\theta} - \left(\theta + \epsilon + \frac{1}{2} \right) {\log\left(1+\frac{\epsilon}{\theta}\right)}.
\end{align*}
For $\frac{\epsilon}{\theta}\ll 1$, it holds that $\log(1+\frac{\epsilon}{\theta}) \approx \frac{\epsilon}{\theta} - \frac{\epsilon^2}{2\theta^2} + \ldots$. Therefore,
\begin{align*}
\log p_X(x)
&\approx \epsilon - \log\sqrt{2\pi\theta} - \left(\theta + \epsilon + \frac{1}{2} \right) {\left(\frac{\epsilon}{\theta} - \frac{\epsilon^2}{2\theta^2} + \ldots\right)}\\
&= \epsilon - \log\sqrt{2\pi\theta} - \epsilon - \frac{\epsilon^2}{\theta} - \frac{\epsilon}{2\theta} + \frac{\epsilon^2}{2\theta} + \frac{\epsilon^2}{4\theta^2} + \ldots.
\end{align*}
By canceling terms, and removing $\frac{\epsilon^2}{4\theta^2}$ and $\frac{\epsilon}{2\theta}$ (because $\frac{\epsilon}{\theta} \ll 1$), it follows that $\log p_X(x) \approx -\frac{\epsilon^2}{2\theta} - \log\sqrt{2\pi\theta}$.
This implies that $p_X(x) \approx \frac{1}{\sqrt{2\pi \theta}}e^{-\frac{\epsilon^2}{2\theta}}$. Substituting $x = \theta + \epsilon$ completes the proof.
\end{proof}

\subsection{Mean Invariant = ML for Exponential Family}
A mean invariant estimator is generally not the same as an ML estimator. However, for distributions in the exponential family, they are identical. Recall the definition of the exponential family. A sequence of i.i.d. random variables $\mY = [Y_1,\ldots,Y_N]$ is said to be in the exponential family if
\begin{equation}
p_{\mY}(\vy;\theta) = h(\vy)\exp\left\{\eta(\theta) T(\vy) - A(\theta)\right\},
\end{equation}
for some functions $h$, $\eta$, $T$, and $A$ \cite{Lehmann_1999}. The function $T(\vy)$ is the sufficient statistic. Since a necessary condition for an estimator to be mean invariant is that it is a function of the sample average $\overline{Y}$, in the subsequent discussions it is assumed that $T(\vy) = \overline{y}$ where $\overline{y} = (1/N)\sum_{n=1}^N y_n$ with $y_1,\ldots,y_N$ as the realizations of $Y_1,\ldots,Y_N$.

\begin{example}
For one-bit QIS, the probability density function can be written as
\begin{align*}
p_{\mY}(\vy;\theta)
&= \prod_{n=1}^N \exp\left\{ \log\left[(1-e^{-\theta})^{y_n} (e^{-\theta})^{1-y_n}\right] \right\}\\
&= \exp\left\{ N \overline{y} \log(1-e^{-\theta}) + N(1-\overline{y})\log(e^{-\theta}) \right\}\\
&= \left[\exp\left\{ \overline{y} \log(e^\theta-1) -\theta \right\}\right]^N.
\end{align*}
So, one can associate $h(\vy) = 1$, $T(\vy) = \overline{y}$, $\eta(\theta) = \log(e^\theta-1)$, $A(\theta) = \theta$.
\end{example}

\begin{lemma}
\label{lemma: MI 1}
$\thetahat_{\text{ML}}(\vy)$ is the ML estimate if and only if it satisfies the equation
\begin{equation}
g(\thetahat_{\text{ML}}(\vy)) \bydef \frac{A'(\thetahat_{\text{ML}}(\vy))}{\eta'(\thetahat_{\text{ML}}(\vy))} = \overline{y},
\label{eq: lemma MI 1}
\end{equation}
where $g(\theta) \bydef A'(\theta)/\eta'(\theta)$ is a function, provided that $A'(\theta)$ and $\eta'(\theta)$ exist and are not zero. If $g^{-1}$ exists, then $\thetahat_{\text{ML}}(\vy) = g^{-1}(\overline{y})$.
\end{lemma}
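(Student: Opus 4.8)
The plan is to read off \eref{eq: lemma MI 1} as the stationarity condition of the log-likelihood, and then to argue that this stationary point is the unique global maximizer so that the characterization genuinely becomes an ``if and only if.'' First I would take the logarithm of the density. Since $T(\vy) = \overline{y}$ and only the factors involving $\theta$ matter for maximization, the log-likelihood as a function of $\theta$ is
\begin{equation*}
\ell(\theta) = \log h(\vy) + \eta(\theta)\,\overline{y} - A(\theta).
\end{equation*}
Differentiating with respect to $\theta$ gives $\ell'(\theta) = \eta'(\theta)\,\overline{y} - A'(\theta)$, where $\overline{y}$ is treated as a constant. Because $\thetahat_{\text{ML}}(\vy)$ is by definition an interior maximizer of $\ell$, it must satisfy $\ell'(\thetahat_{\text{ML}}(\vy)) = 0$, i.e. $\eta'(\thetahat_{\text{ML}})\,\overline{y} = A'(\thetahat_{\text{ML}})$. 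Dividing by $\eta'(\thetahat_{\text{ML}})$, which is permissible by the hypothesis that $\eta'$ is nonzero, yields exactly $g(\thetahat_{\text{ML}}(\vy)) = A'(\thetahat_{\text{ML}})/\eta'(\thetahat_{\text{ML}}) = \overline{y}$. This establishes the forward implication.

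For the converse and the uniqueness that the equivalence demands, I would argue that any root of \eref{eq: lemma MI 1} is in fact the global maximizer, not merely a critical point. The cleanest route is to pass to the natural parameter $\eta = \eta(\theta)$, in which the exponential-family log-likelihood is concave: its second derivative in $\eta$ is $-\Var[T(\vy)] \le 0$, so the stationary point is unique and must be the maximum. Equivalently, one can invoke monotonicity of $g$ directly: the existence of $g^{-1}$ assumed in the last sentence of the lemma guarantees that $g(\theta) = \overline{y}$ has at most one root, so the first-order condition pins the estimate down unambiguously, and inverting then gives $\thetahat_{\text{ML}}(\vy) = g^{-1}(\overline{y})$.

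The hard part will be the converse/uniqueness direction rather than the algebra: showing that a single first-order condition determines the \emph{global} maximum. For a general, non-natural parametrization $\theta$ this is not automatic, since $g$ need not be monotone and $\ell(\theta)$ need not be concave in $\theta$. I expect the argument to lean on either the concavity of the log-likelihood expressed in the natural parameter $\eta(\theta)$, or, what the lemma effectively already assumes, the invertibility (hence monotonicity) of $g$. Once that hypothesis is in force, the forward direction is an immediate differentiation and the inversion step is trivial, so essentially all the content sits in justifying that the stationary point is the maximizer.
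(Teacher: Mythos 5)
Your proposal is correct, and its computational core is exactly the paper's proof: write the log-likelihood as $\ell(\theta) = \log h(\vy) + \eta(\theta)\overline{y} - A(\theta)$, differentiate to get $\eta'(\theta)\overline{y} - A'(\theta)$, set this to zero, and divide by $\eta'(\thetahat_{\text{ML}})$ to obtain \eref{eq: lemma MI 1}. Where you genuinely add something is the converse. The paper's proof opens by asserting that $\thetahat_{\text{ML}}$ is the ML estimate \emph{if and only if} the score vanishes at $\thetahat_{\text{ML}}$ --- i.e., it treats ``stationary point'' and ``maximizer'' as interchangeable and never justifies this, which is the only nontrivial part of the equivalence. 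You correctly flag this as the hard step and offer two repairs: (i) concavity of the log-likelihood in the natural parameter, since its second derivative there is $-\Var[T(\mY)] \le 0$, so any stationary point is a global maximizer; this transfers to the $\theta$-parametrization because $\frac{d\ell}{d\theta} = \frac{d\ell}{d\eta}\,\eta'(\theta)$ and $\eta'(\theta) \neq 0$ by hypothesis; or (ii) invertibility of $g$, which gives at most one root of \eref{eq: lemma MI 1}, so that root must be the ML estimate \emph{provided} an interior maximizer exists. Note that (i) is the stronger of the two, since (ii) quietly presupposes existence of the maximizer, whereas concavity rules out the pathological case of a unique stationary point that is merely an inflection. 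In short, your argument subsumes the paper's and proves the ``if'' direction that the paper only asserts, at the cost of invoking the standard exponential-family fact that the second derivative of the log-partition function in the natural parameter equals $\Var[T(\mY)]$.
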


\begin{proof}
$\thetahat_{\text{ML}}$ is the ML estimate if and only if $\frac{\partial }{ \partial \theta}\log p_{\mY}(\vy;\theta)\big|_{\theta = \thetahat_{\text{ML}}}  = 0$. The derivative is
\begin{align*}
\frac{\partial }{ \partial \theta} \log p_{\mY}(\vy;\theta)
&= \frac{\partial}{ \partial \theta} \left\{\log h(\vy) + \eta(\theta)\overline{y} - A(\theta)\right\} \\
&= \eta'(\theta)\overline{y} - A'(\theta).
\end{align*}
Setting it to zero, the condition is equivalent to $\frac{A'(\thetahat_{\text{ML}})}{\eta'(\thetahat_{\text{ML}})} = \overline{y}$. Defining $g(\theta) = \frac{A'(\theta)}{\eta'(\theta)}$, it follows that $g(\thetahat_{\text{ML}}(\vy)) = \overline{y}$. If $g^{-1}$ exists, then estimator is $\thetahat_{\text{ML}}(\mY) = g^{-1}(\overline{Y})$.
\end{proof}

\begin{lemma}
\label{lemma: MI 2}
For any distributions in the exponential family and for any parameter $\theta$,
\begin{equation}
\frac{A'(\theta)}{\eta'(\theta)} = \E[T(\mY)] \bydef \mu(\theta).
\label{eq: lemma MI 2}
\end{equation}
If $A'(\theta)$ and $\eta'(\theta)$ exist and are not zero, and $\mu^{-1}$ exists, then $\theta = \mu^{-1}(\E[T(\mY)])$.
\end{lemma}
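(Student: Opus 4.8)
The plan is to exploit the single constraint that any exponential-family density must obey, namely that it normalizes to one for every value of $\theta$, and to differentiate that identity with respect to $\theta$. Starting from
\begin{equation*}
\int h(\vy)\exp\left\{\eta(\theta) T(\vy) - A(\theta)\right\} d\vy = 1 \quad \text{for all } \theta,
\end{equation*}
I would differentiate both sides in $\theta$. The right-hand side is constant, so its derivative is zero; on the left, provided the interchange of differentiation and integration is legitimate, the chain rule brings down a factor $\eta'(\theta) T(\vy) - A'(\theta)$ inside the integral.

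Carrying this out produces
\begin{equation*}
\int \left[\eta'(\theta) T(\vy) - A'(\theta)\right] p_{\mY}(\vy;\theta)\, d\vy = 0.
\end{equation*}
The bracketed term is exactly the score $\frac{\partial}{\partial\theta}\log p_{\mY}(\vy;\theta)$ already computed in the proof of Lemma~\ref{lemma: MI 1}, so this is just the familiar statement that the expected score vanishes. Splitting the integral and using linearity of the expectation gives
\begin{equation*}
\eta'(\theta)\,\E[T(\mY)] - A'(\theta) = 0.
\end{equation*}
Since $\eta'(\theta)$ is assumed nonzero, I can divide through to obtain $A'(\theta)/\eta'(\theta) = \E[T(\mY)] \bydef \mu(\theta)$, establishing the first claim.

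The second claim is then immediate. Because the identity $\mu(\theta) = \E[T(\mY)]$ holds for every admissible $\theta$, whenever $\mu^{-1}$ exists one simply applies it to both sides to recover $\theta = \mu^{-1}(\E[T(\mY)])$.

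The step I expect to be the main obstacle is justifying differentiation under the integral sign, which is not automatic. For exponential families, however, this is a standard regularity fact: on the interior of the natural parameter space the normalizing integral is smooth (indeed analytic) in the natural parameter, and a dominated-convergence argument legitimizes the interchange. Rather than reproduce these measure-theoretic details, I would invoke the standard treatment, citing \cite{Lehmann_1999}, where differentiability of the log-partition function for exponential families is established.
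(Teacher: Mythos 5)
Your proposal is correct and takes essentially the same approach as the paper: both arguments differentiate the normalization constraint with respect to $\theta$ and interchange differentiation with integration --- you do it via the zero-mean-score identity $\int\left[\eta'(\theta)T(\vy)-A'(\theta)\right]p_{\mY}(\vy;\theta)\,d\vy=0$, while the paper differentiates $A(\theta)=\log\int h(\vy)e^{\eta(\theta)T(\vy)}d\vy$ directly, which is the same computation written in a different order. Your concluding step for the second claim (applying $\mu^{-1}$ to both sides) and your appeal to standard regularity for the interchange also match the paper's treatment.
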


\begin{proof}
The function $A(\theta)$ is the normalization constant, defined as $A(\theta) = \log \int h(\vy) e^{\eta(\theta) T(\vy)} d\vy$. Therefore, its derivative is
\begin{align*}
A'(\theta)
&= \frac{d}{d\theta} \left\{\log \int h(\vy) e^{\eta(\theta) T(\vy)} d\vy  \right\}\\
&= \frac{\int h(\vy) e^{\eta(\theta) T(\vy)} T(\vy) \eta'(\theta) d\vy}{\int h(\vy) e^{\eta(\theta) T(\vy)} \vy}\\
&= \eta'(\theta) \int T(\vy) \cdot p_{\mY}(\vy;\theta) d\vy = \eta'(\theta) \E[T(\mY)].
\end{align*}
Thus, if $\mu^{-1}$ exists, then $\theta = \mu^{-1}(\E[T(\mY)])$.
\end{proof}

Comparing \eref{eq: lemma MI 1} and \eref{eq: lemma MI 2}, it follows that the mapping is $g(\theta) = \mu(\theta)$. Therefore, $\mu(\thetahat_{\text{ML}}(\mY)) = g(\thetahat_{\text{ML}}(\mY))$. But since $g(\thetahat_{\text{ML}}(\mY)) = \overline{Y}$, it follows that $\mu(\thetahat_{\text{ML}}(\mY)) = \overline{Y}$. Thus, the ML estimator is mean invariant.

\subsection{Proof of Theorem 2}
\begin{proof}
Since $\overline{Y} = (1/N)\sum_{n=1}^N Y_n$, the mean $\E[\overline{Y}] = \E[Y_1]$. It then follows that the derivative $d\mu/d\theta$ remains unchanged. For the variance, it is easy to show that $\Var[\overline{Y}] = \Var[Y_1]/N$. Substituting these results into \eref{eq: SNR exp main} would yield
\begin{align*}
\text{SNR}_{\text{exp}}(\theta)
&= \frac{\theta}{\sqrt{\Var[\overline{Y}]}} \cdot \frac{d\mu}{d\theta} = \sqrt{N} \frac{\theta}{\sqrt{\Var[Y_1]}} \cdot \frac{d\mu}{d\theta}.
\end{align*}
Therefore, it suffices to prove the SNR for $N = 1$. In what remains, let $Y$ be any of the random variables $Y_1,\ldots,Y_N$ since they are i.i.d.

Recall the probability density function of $Y$:
\begin{align*}
p_Y(y) =
\begin{cases}
\frac{\theta^y}{y!} e^{-\theta}, &\qquad y < L,\\
\sum_{k=L}^{\infty} \frac{\theta^k}{k!} e^{-\theta} = 1-\Psi_L(\theta), &\qquad y \ge L,
\end{cases}
\end{align*}
where $\Psi_L(\theta)$ is the incomplete Gamma function. The mean of $Y$ can be shown as
\begin{align*}
\mu = \E[Y]
&= \sum_{k=0}^{L-1} k \cdot \frac{\theta^k}{k!} e^{-\theta} + L \cdot \left(\sum_{k=L}^{\infty} \frac{\theta^k}{k!} e^{-\theta}\right)\\
&= \sum_{k=1}^{L-1} \frac{\theta^k}{(k-1)!} e^{-\theta} + L \cdot (1-\Psi_{L}(\theta))\\
&= \theta \sum_{k=0}^{L-2} \frac{\theta^{k-1}}{k!} e^{-\theta} + L \cdot (1-\Psi_{L}(\theta))\\
&= \theta \Psi_{L-1}(\theta) + L \cdot (1-\Psi_{L}(\theta)).
\end{align*}
The derivative $d\mu/d\theta$ is therefore
\begin{align*}
\frac{d\mu}{d\theta}
&= \frac{d}{d\theta}\left\{\theta \Psi_{L-1}(\theta) + L \cdot (1-\Psi_{L}(\theta))\right\} \\
&= \theta \Psi_{L-1}'(\theta) + \Psi_{L-1}(\theta) - L \cdot \Psi'_L(\theta).
\end{align*}

For the variance, since $\Var[Y] = \E[Y^2] - \mu^2$, it remains to determine $\E[Y^2]$.
\begin{align*}
\E[Y^2]
&= \sum_{k=0}^{L-1} k^2 \cdot \frac{\theta^k}{k!} e^{-\theta} + L^2 \cdot \left(\sum_{k=L}^{\infty} \frac{\theta^k}{k!} e^{-\theta}\right)\\
&= \sum_{k=1}^{L-1} k \frac{\theta^k}{(k-1)!} e^{-\theta} + L^2 \cdot (1-\Psi_{L}(\theta))\\
&= \sum_{k=1}^{L-1} (k - 1 + 1) \frac{\theta^k}{(k-1)!} e^{-\theta} + L^2 \cdot (1-\Psi_{L}(\theta))\\
&= \sum_{k=2}^{L-1} \frac{\theta^{k}}{(k-2)!} e^{-\theta} + \sum_{k=1}^{L-1} \frac{\theta^{k}}{(k-1)!} e^{-\theta} \\
&\qquad\qquad\qquad\qquad\qquad + L^2 \cdot (1-\Psi_{L}(\theta))\\
&= \theta^2 \Psi_{L-2}(\theta) + \theta\Psi_{L-1}(\theta) + L^2(1-\Psi_L(\theta)).
\end{align*}
This completes the proof.
\end{proof}

\subsection{Proof of Corollary~\ref{corollary: limiting case}}
\begin{proof}
When $L$ is large, $\Psi_L(10^\phi)$ and $\Psi_{L-1}(10^\phi)$ are close enough that they can be considered approximately equal. Denote the value $\Psi_L(10^\phi)$ as $\Psi$. Then by \eref{eq: limit Psi} it holds that $\Psi \rightarrow 0$ for $\phi > \log_{10}L$ and $\Psi \rightarrow 1$ for $\phi \le \log_{10}L$. In either case, since $\Psi$ is a constant, it follows that the derivative $\Psi'_L(10^\phi) = 0$ as long as $\phi > \log_{10}L$ or $\phi < \log_{10}L$. Therefore, by denoting $\theta = 10^\phi$, the two cases can be derived as follows.

When $\phi \le \log_{10}L$, it holds that
\begin{align*}
\mu     &= \theta \Psi + L (1-\Psi) = \theta,\\
\Var[Y] &= \theta^2 \Psi + \theta\Psi + L^2(1-\Psi) - \mu^2 \\
        &= \theta^2 \cdot 1 + \theta \cdot 1 + L^2 \cdot 0 - \mu^2 = \theta,\\
\frac{d\mu}{ d\theta }  &= \Psi - (L-\theta)\Psi_{L-1}'(\theta)\\
                        &= 1 - (L-\theta)\cdot 0 = 1.
\end{align*}
So, the SNR for $\phi \le \log_{10}L$ is
\begin{align*}
20\log_{10}\text{SNR}_{\exp}(10^\phi) = 20\log_{10}\sqrt{10^\phi} = 10 \phi.
\end{align*}

When $\phi > \log_{10}L$, $\Psi \rightarrow 0$. Therefore,
\begin{align*}
\mu     &= \theta \Psi + L (1-\Psi) \approx \theta \Psi + L,\\
\Var[Y] &= \theta^2 \Psi + \theta\Psi + L^2(1-\Psi) - \mu^2 \\
        &= \theta^2 \Psi + \theta\Psi + L^2(1-\Psi) - (\theta \Psi + L)^2\\
        &= \theta^2 \Psi + \theta\Psi + L^2 - \theta^2 \Psi^2 - 2\theta\Psi L -  L^2 \\
        &= \theta^2 \Psi + \theta\Psi - 2\theta\Psi L\\
        &= \theta\Psi(\theta + 1 - 2L),\\
\frac{d\mu}{ d\theta }  &= \theta \Psi_{L-1}'(\theta) + \Psi_{L-1}(\theta)- L\Psi_{L}'(\theta)\\
                        &= \theta \cdot 0 + \Psi- L \cdot 0 = \Psi.
\end{align*}
By taking the limit that $\Psi \rightarrow 0$, it follow that
\begin{align*}
\lim_{\Psi\rightarrow 0} \text{SNR}_{\exp}(10^\phi)
&= \lim_{\Psi\rightarrow 0} \frac{\theta}{\sqrt{\Var[Y]}} \cdot \frac{d\mu}{ d\theta } \\
&= \lim_{\Psi\rightarrow 0} \frac{\theta}{\sqrt{\theta\Psi(\theta + 1 - 2L)}} \cdot \Psi \\
&= \lim_{\Psi\rightarrow 0} \frac{\sqrt{\theta \Psi}}{\sqrt{(\theta + 1 - 2L)}} = 0.
\end{align*}
Combining with the case where $\phi \le \log_{10}L$, the overall SNR is proved.
\end{proof}

\subsection{MATLAB Code for Monte Carlo Simulation}
The MATLAB code below illustrates the Monte Carlo simulation of how $\text{SNR}_{\text{exp}}(\theta)$ is generated for a truncated Poisson distribution. Adding other factors to the forward model can be done by modifying the random variable $Y$.
\begin{Verbatim}[frame=single,framerule=0.2mm]
N  = 100000;
L  = 10;
theta_set = logspace(-2,3,100);
mu    = zeros(1,100);
sigma = zeros(1,100);
for i=1:100
    theta    = theta_set(i);
    Theta    = theta*ones(N,1);
    Y        = poissrnd(Theta);
    Y(Y>L)   = L;
    mu(i)    = mean(Y);
    sigma(i) = std(Y);
end
dmu_dt  = [diff(mu)./diff(theta_set) 1];
SNR     = theta_set./sigma.*dmu_dt;
loglog(theta_set, SNR);
\end{Verbatim}

For plotting the theoretical $\text{SNR}_{\text{exp}}(\theta)$, one just needs to call the incomplete Gamma function.
\begin{Verbatim}[frame=single,framerule=0.2mm]
theta = logspace(-2,3,100);
Psi   = gammainc(theta,L,'upper');
Psi1  = gammainc(theta,L-1,'upper');
Psi2  = gammainc(theta,L-2,'upper');
dPsi  = -theta.^(L-1).*exp(-theta) ...
        /gamma(L);
dPsi1 = -theta.^(L-2).*exp(-theta) ...
        /gamma(L-1);
mu    = theta.*Psi1 + L.*(1 - Psi);
sigma = sqrt(theta.^2 .* Psi2 + ...
        theta.*Psi1 + L^2*(1-Psi) - ...
        mu_theory.^2);
dmu_dt = theta.*dPsi1 + Psi1 - L*dPsi;
SNR    = theta./sigma.*dmu_dt;
loglog(theta, SNR);
\end{Verbatim}
The combination of the two pieces of codes, with minor modifications, is sufficient to reproduce all the figures reported in this paper.

\bibliography{ref}
\bibliographystyle{ieeetr}

\end{document}